\newtheorem{theorem}{Theorem}[section]
\newtheorem{lemma}[theorem]{Lemma}
\newtheorem{definition}{Definition}
\newcommand{\eps}{\left\lceil \frac{1}{\epsilon} \right\rceil}
\title{A Generalisation of Voter Model: Influential Nodes and Convergence Properties}
\author{
Abhiram Manohara$^1$
\and
Ahad N. Zehmakan$^2$
\affiliations
$^1$ Indian Institute of Science, Bangalore, India\\
$^2$ School of Computing, The Australian National University, Canberra, Australia
\emails
abhiramm@iisc.ac.in,
ahadn.zehmakan@anu.edu.au\\
November 2024
}
\begin{document}

\maketitle

\begin{abstract}
Consider a social network where each node (user) is blue or red, corresponding to positive or negative opinion on a topic. In the voter model, in discrete time rounds, each node picks a neighbour uniformly at random and adopts its colour. Despite its significant popularity, this model does not capture some fundamental real-world characteristics such as the difference in the strengths of connections, individuals with no initial opinion, and users who are reluctant to update. To address these issues, we introduce a generalisation of the voter model. 

We study the problem of selecting a set of seed blue nodes to maximise the expected number of blue nodes after some rounds. We prove that the problem is NP-hard and provide a polynomial time approximation algorithm with the best possible approximation guarantee. Our experiments on real-world and synthetic graph data demonstrate that the proposed algorithm outperforms other algorithms.

We also prove that the process could take an exponential number of rounds to converge. However, if we limit ourselves to strongly connected graphs, the convergence time is polynomial and the convergence period (size of the stationary configuration) is bounded by the highest common divisor of cycle lengths in the network.
\end{abstract}

\section{Introduction}

Humans constantly form and update their opinions on different topics, from minor subjects such as which movie to watch and which new caf\'{e} to try to major matters such as which political party to vote for and which company to invest in. In the process of making such decisions, we tend to rely not only on our own personal judgement and knowledge, but also that of others, especially those whose opinion we value and trust. As a result, opinion diffusion and influence propagation can affect different aspects of our lives, from economy and defence to fashion and personal affairs.

Recent years have witnessed a booming development of online social networking platforms like Facebook, WeChat, and Instagram. The enormous popularity of these platforms has led to fundamental changes in how humans share and form opinions. Social phenomena such as disagreement and polarisation that have existed in human societies for millennia, are now taking place in an online virtual world and are tightly woven into everyday life, with a substantial impact on society. As a result, there has been a growing demand for a quantitative understanding of how opinions form and diffuse because of the existence of social ties among a community’s members. Within the field of computer science, especially computational social choice, there has been a rising interest in developing and analysing mathematical models which simulate the opinion diffusion in a network of individuals, cf.~\cite{bredereck2017manipulating, faliszewski2022opinion, bredereck2021maximizing,gartner2017color,shirzadi2024stubborn}.
A very popular opinion diffusion model is the Voter Model (VM)~\cite{Has99}. Consider a social network $G$, where each node (user) is either blue or red. In each round, every node picks a neighbour at random and adopts its colour. Red and blue can, for example, represent a positive/negative opinion about a topic/product. This captures the setup where switching colours is free or inexpensive; for example, changing opinions about a controversial topic or switching from one grocery store chain to another.

\textbf{First Contribution: Model Generalisation.} Despite being simple and intuitive, VM has some fundamental shortcomings. Firstly, it assumes the underlying graph is undirected and unweighted. We allow the graph to be directed (modelling one-directional relations such as following in Instagram) and weighted (modelling the strengths of the relationships). Self loops indicate inertia against changing their opinion, thus modelling situations where switching is expensive. Secondly, unlike VM, our model permits for users which have no initial opinion (uncoloured nodes) who can then gain an opinion through interaction. Thirdly, our model considers the users who may be stubborn and don't update their opinions as a result of interaction with their neighbours.

\textbf{Second Contribution: Adoption Maximisation.} In viral marketing, one aims to convince a subset of users to adopt a positive opinion about a product (i.e., become blue) with the goal that this results in a further adoption of blue colour by many other users later in the propagation process. Motivated by this application, we study the optimisation problem of maximising the expected number of blue nodes after some rounds by selecting a fixed number of initial blue nodes. We prove that the problem cannot be approximated better than $(1-1/e)$, unless a fundamental complexity hypothesis is violated. While the proof uses a classical reduction construction from the Maximum Coverage problem, the main difficulty that we need to overcome using novel techniques, is handling the randomness involved in the process. We believe some of the tools developed can be of interest to a wider set of problems. We then provide a polynomial time algorithm with such approximation ratio. While the algorithm follows a simple greedy approach, the proof of submodularity and polynomial calculation of the objective function require novel techniques from graph theory and linear algebra. Furthermore, our experiments on different real-world graph data demonstrate that our proposed algorithm outperforms other methods.



\textbf{Third Contribution: Convergence Properties.} How long does it take for our model to converge in expectation (the convergence time) and how many states/colouring are in a converged configuration (convergence period)? (Please see Section~\ref{preliminaries:sec} for formal definitions.) Leveraging techniques from Markov chain analysis and combinatorics, we prove that both convergence time and period can be exponential in the general case. However, we provide polynomial bounds for special classes of graphs, such as strongly connected graphs, in terms of the number of nodes and the highest common divisor of cycle lengths in the graph.

\section{Preliminaries}
\label{preliminaries:sec}

\textbf{Graph Definitions.} A weighted directed \textit{graph} $G$ is an ordered triple $(V,E, \omega)$ where elements of $V$ are nodes and $E \subset V \times V$ is a set of ordered pairs of nodes called edges and $\omega:E\rightarrow \mathbb{R}^+$ is a function that assigns a positive \textit{weight} to each edge. We define $n := |V|$ and $m := |E|$. For two nodes $v_1,v_2\in V$, we say $v_1$ is an \textit{in-neighbour} of $v_2$ (and $v_2$ is an \textit{out-neighbour} of $v_1$) when $(v_1, v_2) \in E$. Furthermore, we say there is an edge \textit{from} $v_1$ \textit{to} $v_2$. Let $N_+(v_1) := \{v \in V : (v_1,v) \in E\}$ and $N_-(v_1) := \{v \in V : (v,v_1) \in E\}$ be the set of \textit{out-neighbours} and \textit{in-neighbours} of $v_1$ respectively and $d_+(v_1) := |N_+(v_1)|$ and $d_-(v_1) := |N_-(v_1)|$ be the \textit{out degree} and \textit{in degree}. Graph $G$ is said to be \textit{normalised} if $\sum_{u \in N_+(v)}{\omega((v,u)) = 1}$ for every node $\ v \in V$.

A \textit{path} is a list of \textit{distinct} nodes $P = p_0, p_1, p_2, ... ,p_l$ for $0 \leq l < n$ such that $(p_i, p_{i+1}) \in E$ for any $0 \leq i < l$. The \textit{length} of a path $l$ is the number of edges in the path. For any two nodes $S,T \subset V$, an \textit{$S$-$T$ path} (or a path from $S$ to $T$) is a path $P$ such that $S \cap P = \{p_0\}$ and $T \cap P = \{p_l\}$. A graph is \textit{strongly connected} if for any two nodes $v_1, v_2 \in V$, there exists a path from $v_1$ to $v_2$. The \textit{distance} $\mu(v_1,v_2)$ is the length of the shortest path from $v_1$ to $v_2$. The \textit{diameter}  of a graph is the largest distance between any two nodes.


\textbf{Model Definitions.} A \textit{colouring} is a function $S : V \rightarrow \{r,b,u\}$ where $r$ stands for red, $b$ for blue and $u$ for uncoloured (a node who hasn't adopted an opinion yet). We say a node is coloured if it is not uncoloured (i.e., is either blue or red). Define $B_+^{S}(v)$ to be the nodes in $N_+(v)$ which are blue in $S$. We write $B_+(v)$ when $S$ is clear from the context.

We also introduce the notion of stubbornness, where some nodes are fixed on their opinion and not willing to change it. The following definition makes it clear that we capture this notion through nodes with out degree 0.
\begin{definition}[Generalised Voter Model]
In the \textsc{Generalised Voter Model} (GVM) on a graph $G = (V,E,\omega)$ and an initial colouring $S_0$, nodes update their colour simultaneously. In each discrete-time round $t$, let the colouring be $S_t$. $S_{t+1}$ is decided node-wise as follows: each node $v$ picks an out-neighbour proportional to the weight of the edge to that node and adopts that colour if it is red or blue. If it is uncoloured or if $v$ has no out-neighbours $v$ retains its colour. Say $v$ picks some $w \in N_+(v)$; then, $S_{t+1}(v) = S_t(w)$ if $S_t(w) \neq u$, and $S_{t+1}(v) = S_t(v)$, otherwise.
\end{definition}

The nodes with no out-neighbours will retain their initial colour after each round, emulating stubbornness (or loyalty). Note that under our model, uncoloured represents nodes who are yet to be introduced to either idea. A stubborn uncoloured node represents someone who has no opinion and refuses to get one. In the setting of elections, it may be someone who dislikes politics and will abstain from voting.

If a node $v$ was not already blue in round $t$, the probability of turning blue in round $t+1$  is the probability of picking a blue neighbour. Since the choice of neighbour is determined by the weights, this comes to be $(\sum_{u \in B_+^{S_t}(v)}\omega((v,u)))/(\sum_{u \in N_+(v)}\omega((v,u)))$. If $S_t(v) = b$, we have to add the probability of picking an uncoloured neighbour since in that case $v$ remains blue. A similar argument applies to red and uncoloured case. Note that if we normalise the graph by dividing the weight of each outgoing edge for a node $v$ by $\sum_{u \in N_+(v)}\omega((v,u))$, the probabilities of picking each neighbour is not affected and hence our model is not affected. Furthermore, if $\omega'$ given by $\omega'((v,w)) := \omega((v,w))/(\sum_{u \in N_+(v)}\omega((v,u)))$ for all $(v,w) \in E$ are the normalised weights, the probability of picking a neighbour is the weight of the edge to that neighbour. For example, the aforementioned probability can be rewritten as $\sum_{u \in B_+^{S_t}(v)}\omega'((v,u))$. Going forward, we will assume that all graphs are normalised.


Our model reduces to the original VM~\cite{Has99}, when there are no stubborn nodes, the graph is unweighted and undirected, and all nodes are either blue or red (no uncoloured nodes).

Since this is a probabilistic process, at each time $t>0$,  node $v$ has a probability of being in each colour. Thus, by misusing the notation, we represent $S_t(v)$ as a vector of length 3 indicating probability of being coloured $r$, $b$, and $u$ at time $t$, when it's clear from the context. Further, assuming an ordering of the nodes, we represent $S_t$ itself as a $n \times 3$ matrix where each row is the vector for one node. For $1 \leq i \leq n$ and $c \in \{r,b,u\}$, we shall use $S_t(v_i, c)$ to refer to the probability of $v_i$ having colour $c$ at time $t$.

\textbf{Adoption Maximisation.} Now, we are ready to introduce our viral marketing problem.

\begin{definition}[Adoption Maximisation (AM) Problem]
    Let $\Omega = (G=(V,E,\omega), S)$ be a system. For $A \subset V$ define $\Omega_A$ to be $(G,S')$ where $S'$ is given by $S'(v) = b$ for $v \in A$ and $S'(v) = S(v)$ for $v \in V \setminus A$ and $F_\tau(A) = \sum_{v \in V} S_\tau^{\Omega_A}(v,b)$ is the expected number of blue nodes at time $\tau$. For a system $\Omega$, a time $\tau$ and a budget $k$ the AM problem is to find $$\underset{A \subset V, |A| \leq k}{argmax} F_\tau(A)$$
\end{definition}
\textbf{Convergence Properties.}
Our model corresponds to a Markov chain, where the states correspond to all possible $3^n$ colourings and there is an edge from one state to another if there is a non-zero transition probability. Consider the directed graph of the Markov chain with the node (state) set $\mathcal{S}$. A strongly connected component is a maximal node set such the subgraph induced by the node set is strongly connected. The node set $\mathcal{S}$ can be partitioned to strongly connected components. On contracting these components to single nodes, we get the component graph $C_G$ of the graph. $C_G$ then has a topological ordering. The absorbing strongly connected components are the components that correspond to nodes with out degree 0 in $C_G$ and we shall call them the leaves of the graph. The process is said to converge when it enters one of the leaves of $C_G$. The number of rounds the process needs to enter a leaf is the \textbf{\textit{convergence time}} and the size of the leaf component (number of states) is the \textbf{\textit{period}} of convergence.

\section{Related Work}
\label{previous-sec}

\textbf{Models.} Numerous opinion diffusion models have been developed to understand how members of a community form and update their opinions through social interactions with their peers, cf.~\cite{noorazar2020recent, brill2016pairwise, wilczynski2019poll,gartner2017color,li2023graph}. As mentioned, our main focus is on the generalisation of Voter Model, which was introduced originally in~\cite{Has99}, and has been studied extensively afterwards, cf.~\cite{petsinis2023maximizing,gauy2025votermodelmeetsrumour}.

We first give a short overview of some of the most popular opinion diffusion models. The \textit{Independent Cascade} (IC) model, popularised by the seminal work of~\cite{kempe2003maximizing}, has obtained substantial attention to simulate viral marketing, cf.~\cite{li2018influence,liu2023fast}. In this model, initially each node is uncoloured (inactive), except a set of seed nodes which are coloured (active). Once a node is coloured, it gets one chance to colour each of its out-neighbours. Different extensions of the IC model have been introduced, cf.~\cite{lin2015analyzing, myers2012clash}. The IC model aims to simulate the spread of influence or the adoption of novel technology (with no competitor).
In the \textit{Threshold model}, each node $v$ has a threshold $\tau(v)$. From a starting state, where each node is either coloured or uncoloured, an uncoloured node becomes coloured once $\tau(v)$ fraction of its out-neighbours are coloured. Different versions of the threshold model have been studied, cf.~\cite{n2020rumor,gartner2020threshold,out2021majority}. In the \textit{Majority model}. cf.~\cite{chistikov2020convergence, zehmakan2024majority}, in every round each node updates its colour to the most frequent colour in its out-neighbourhood. Unlike the IC or Threshold model, here a node can switch back and forth between red and blue (similar to ours).

\textbf{Adoption Maximisation.} For various models, the problem of finding a seed set of size $k$ which maximises the expected number of nodes coloured with a certain colour after some rounds have been studied extensively.
This problem is proven to be NP-hard in most scenarios, and thus the previous works have resorted to approximation algorithms for general case (see~\cite{lu2015competition, zehmakan2019spread}) or exact algorithms for special cases (see~\cite{bharathi2007competitive,zehmakan2023random}). For example, for the Threshold model, the problem cannot be approximated within the ratio of $O(2^{\log^{1-\epsilon}n})$, for any constant $\epsilon>0$, unless $NP\subseteq DTIME(n^{polylog(n)})$~\cite{chen2009approximability}. However, the problem is traceable for trees~\cite{centeno2011irreversible} and there is a $(1-1/e)$-approximation algorithm for the Linear Threshold (LT) model, where the threshold $\tau(v)$ is chosen uniformly at random in $[0,1]$, cf.~\cite{kempe2003maximizing}.

For the Voter Model, it was proven that the final fraction of blue nodes is equal to the summation of the degree of all initially blue nodes divided by the summation of all degrees~\cite{Has99}. Thus, a simple algorithm which picks the nodes with the highest degree solves the problem in polynomial time. However, as we will prove, the problem is computationally much harder in our more general setup. The problem also has been proven to be NP-hard, by~\cite{even2007note}, when each node has a cost and the goal is to maximise the expected number of blue nodes in round $\tau$ of the Voter Model for a given cost.

\textbf{Convergence Properties.}
The convergence time is arguably one of the most well-studied characteristic of dynamic processes, cf.~\cite{auletta2018reasoning, auletta2019consensus}. For the Majority model on undirected graphs, it is proven by~\cite{poljak1986pre} that the process converges in $\mathcal{O}(n^2)$ rounds (which is tight up to some poly-logarithmic factor~\cite{frischknecht2013convergence}). The convergence properties have also been studied for directed acyclic graphs~\cite{chistikov2020convergence}, expander graphs~\cite{zehmakan2020opinion}, and when the updating rule is biased~\cite{lesfari2022biased,zehmakan2021majority}. It has recently also been studied for general directed graphs under the voter model [\cite{gauy2025votermodelmeetsrumour}].
For the Voter Model, an upper bound of $\mathcal{O}(n^3\log n)$ has been proven in~\cite{Has99} using reversible Markov chain argument.
~\cite{abdullah2015global} considered a model similar to the Voter Model with two alternatives, where in each round every node picks $k$ of its neighbours at random and adopts the majority colour among them. They proved that starting from a random initial colouring, the process converges in $\mathcal{O}(\log_k\log_k n)$ rounds in expectation.

\section{Maximum Adoption Problem}
\label{max-sec}

\subsection{Innaproximability}
\begin{theorem}
\label{hardness}
    There is no polynomial time $(1 - \frac{1}{e} +\epsilon)$-approximation algorithm (for any constant $\epsilon > 0$) for the Adoption Maximisation problem, unless $NP \subseteq DTIME(n^{O(\log \log n)})$.
\end{theorem}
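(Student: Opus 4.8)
The plan is to reduce the Maximum Coverage problem to Adoption Maximisation in an approximation-preserving way, invoking Feige's classical result that Maximum Coverage admits no polynomial-time $(1-1/e+\epsilon)$-approximation unless $NP \subseteq DTIME(n^{O(\log\log n)})$. Given a Maximum Coverage instance with universe $U$, sets $\mathcal{S}=\{S_1,\dots,S_m\}$ and budget $k$, I would build a directed weighted graph with one \emph{set node} $s_i$ per set and $W$ identical \emph{element nodes} per universe element $u$ (the multiplicity $W$, polynomially large, is fixed later). Each element node for $u$ gets an out-edge of weight $1/d_u$ to every set node $s_i$ with $u \in S_i$ (here $d_u$ denotes the number of such sets), while every set node is given out-degree $0$, making it stubborn. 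The initial colouring leaves all set and element nodes uncoloured, so that a seed set $A$ of size $k$ corresponds to choosing which set nodes to paint blue, i.e.\ which sets to pick for the cover. The time horizon is set to a polynomial such as $\tau = nm$.

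The crux, and the step I expect to be the main obstacle, is exactly the issue the authors flag: taming the randomness so that the \emph{proportional} voter dynamics emulate the \emph{saturating} behaviour of set coverage. My key observation is that this gadget makes blueness \textbf{absorbing and monotone}. Since stubborn blue set nodes never change and all unchosen set nodes stay uncoloured forever, an element node can never turn red; while still uncoloured it is absorbed into blue in a given round with probability exactly $c_u(A)/d_u$, where $c_u(A)$ is the number of chosen sets containing $u$, and otherwise stays uncoloured. Hence each element undergoes a simple geometric absorption and
\[
S_\tau(e_u, b) \;=\; 1 - \left(1 - \tfrac{c_u(A)}{d_u}\right)^{\tau}.
\]
For an uncovered element ($c_u(A)=0$) this is exactly $0$, whereas for any covered element ($c_u(A)\ge 1$) it is at least $1 - (1-1/d_u)^{\tau}$, which for $\tau = nm$ and $d_u \le m$ differs from $1$ by only an exponentially small amount. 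This is what converts ``covered by at least one chosen set'' into ``blue with probability $\approx 1$'', collapsing the dependence on the multiplicity $c_u(A)$ and recovering genuine coverage rather than a linear, trivially optimisable objective.

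It then remains to turn these estimates into an approximation-preserving reduction. Summing over all nodes gives $F_\tau(A) = k + W\,|\bigcup_{i\in A}S_i| - \eta(A)$, where the error $\eta(A)$ is at most $nW$ times an exponentially small term and is therefore negligible. Two routine checks finish the argument: first, that an optimal seed set may be assumed to consist only of set nodes, since an element seed contributes just $1$ and (having no in-edges) influences no other node, so it can be swapped for a set containing that element without decreasing the objective; and second, that taking $W$ polynomially large makes the additive offset $k$ and the error $\eta$ lower-order relative to the coverage term $W\cdot|\bigcup_{i\in A}S_i|$, so that a $(1-1/e+\epsilon)$-approximation for Adoption Maximisation yields a $(1-1/e+\epsilon-o(1))$-approximation for Maximum Coverage. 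This contradicts Feige's bound unless $NP \subseteq DTIME(n^{O(\log\log n)})$, giving the theorem. The only genuinely delicate point is the monotone-absorption argument; once blueness is shown to be irreversible, the remaining gap-preserving bookkeeping is standard.
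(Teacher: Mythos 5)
Your proposal is correct and, at its core, is the same reduction as the paper's: set nodes are stubborn (out-degree $0$) sinks, element nodes start uncoloured with uniform out-edges to the sets containing them, and since no red node ever exists, blueness is absorbing, so each element undergoes geometric absorption and every covered element is blue with probability exponentially close to $1$ after polynomially many rounds --- exactly the paper's mechanism for taming the randomness. The differences are in the amplification gadget and the bookkeeping: the paper amplifies each element node $o_j$ by attaching $d=\max(\lceil 1/\epsilon\rceil,m)$ pendant in-neighbours $o_j^1,\dots,o_j^d$ (which copy $o_j$'s colour one round later), fixes $\tau=ld+1$, and drives an exact inequality chain (Lemmas~\ref{ineq}, \ref{a-vs-lemma}, \ref{lemma4}, \ref{noopt}) to the conclusion $\overline{MC}>(1-\frac{1}{e})MC$ on every constructed instance, whereas you amplify with $W$ parallel element copies, use the closed-form absorption probability $1-(1-c_u(A)/d_u)^{\tau}$, and let a polynomially large $W$ swallow the seed offset $k$ and the exponentially small error into an $o(1)$ ratio loss, contradicting Feige's bound on all sufficiently large instances. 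Your route is simpler to verify (closed-form objective, no constant-chasing); the paper's buys explicit constants and avoids the ``sufficiently large instance'' caveat. Two details to make explicit in a full write-up: the element-seed-to-set-seed swap must be applied not only to the optimal seed set but also to the arbitrary seed set returned by the hypothetical AM algorithm, so its value can be translated into a coverage value (this is the role of the paper's Lemma~\ref{a-vs-lemma}, and your swap works verbatim since it never decreases the objective and runs in polynomial time); and you need the harmless standing assumptions $k<|\mathcal{S}|$ and $\mathrm{OPT}_{MC}\ge 1$ so that an unchosen set node always exists and your additive terms are genuinely lower-order.
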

\textit{Proof Sketch.} (The detailed proof is presented in the Appendix) We will prove this by reducing any instance of the Maximum Coverage problem (cf.~\cite{khuller1999budgeted}) to an instance of the Adoption Maximisation Problem. Consider an instance of the Maximum Coverage problem. Let $\mathcal{O} = \{O_1,O_2, \cdots O_m\}$ be the set and $\mathcal{S} = \{S_1, S_2, \cdots S_l\}$ be the collection of subsets of $\mathcal{O}$. We need to find $\mathcal{A} \subset \mathcal{S}$ of size $k$ that maximises $\bigcup_{S \in \mathcal{A}} S$. If $k=0$, the answer is $0$. If $k \geq l$ we just pick all the subsets and if $\exists \ O \notin \bigcup_{S \in \mathcal{S}} S$, it can't be covered, so we can safely ignore it. Thus, assume $0 < k < l$ (this implies that $l \geq 2$ since $k$ is an integer) and $\bigcup_{S \in \mathcal{S}} S = \mathcal{O}$. If $k\geq m$, there is a solution of size $m$ since for each object, we can pick a subset which includes it. So assume $k<m$. A similar argument yields that the optimal value is at least $k$.


Now consider node sets $V_{\mathcal{O}} := \{o_1, \cdots o_m\}$ and $V_{\mathcal{S}} := \{s_1, \cdots s_l\}$. Add edges $(o_i,s_j)$ if and only if $O_i \in S_j$. Additionally, for each $1 \leq j \leq m$, we introduce $d=max(\eps, m)$ more nodes $o_j^1, \cdots o_j^d$, and edges $(o_j^1,o_j), \cdots (o_j^d,o_j)$. All edges have weight 1 (before normalisation) and all nodes are initially uncoloured. We also set $\tau = ld+1$. This completes the construction of an instance of our problem from the Maximum Coverage problem. (We also observe that $(1 - \frac{1}{e} +\epsilon)$ must be at most $1$. Thus, $\epsilon < \frac{1}{e}$ and $\eps \geq 2$.)

\begin{lemma}
\label{ineq}
For the above choices of parameters, the following inequalities hold:
\begin{enumerate}[label=\Alph*.]
    \item $1-\left(1-\frac{1}{l}\right)^\tau - d\left(1-\frac{1}{l}\right)^{\tau-1} > 0$
    \item $\frac{\epsilon}{2}>\left(1-\frac{1}{l}\right)^{\tau-1}\left(1-\frac{1}{e}+\epsilon\right)$
    \item $d+1 > \frac{\frac{1}{e} - \left(1-\frac{1}{l}\right)^{\tau-1}}{\epsilon - \left(1-\frac{1}{l}\right)^{\tau-1}\left(1-\frac{1}{e}+\epsilon\right)}$
\end{enumerate}
\end{lemma}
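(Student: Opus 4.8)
The plan is to reduce all three inequalities to a single exponential-decay estimate, namely
\[
\left(1-\tfrac{1}{l}\right)^{\tau-1} \le e^{-d},
\]
which I would prove first. Since $\tau-1 = ld$, write $(1-1/l)^{\tau-1} = \bigl((1-1/l)^{l}\bigr)^{d}$; the elementary bound $1-x \le e^{-x}$ at $x = 1/l$, raised to the $l$-th power, gives $(1-1/l)^{l} \le e^{-1}$, and raising to the $d$-th power yields the claim. The only other inputs are the parameter facts fixed by the construction, namely $d = \max(\eps,m) \ge \eps \ge 1/\epsilon \ge 2$ and $\epsilon < 1/e$ (whence $1 - 1/e + \epsilon < 1$), together with the calculus facts $e^{t} > 1+t$ and $e^{t} > 2t$ valid for all $t>0$ (the latter since $e^{t}-2t$ has minimum $2(1-\ln 2)>0$ at $t=\ln 2$). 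I would apply these at $t=d$ and $t=1/\epsilon$.

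For A, factor out $(1-1/l)^{\tau-1}$ to rewrite its left side as $1 - (1-1/l)^{\tau-1}\bigl[(1-1/l)+d\bigr]$; using $1-1/l<1$ it suffices that $(1-1/l)^{\tau-1}(1+d)<1$, which by the decay estimate follows from $e^{-d}(1+d)<1$, i.e.\ from $e^{d}>1+d$. For B, since $1-1/e+\epsilon<1$ the right side is at most $(1-1/l)^{\tau-1} \le e^{-d}$, so it suffices that $e^{-d}<\epsilon/2$, equivalently $e^{d}>2/\epsilon$; as $d\ge 1/\epsilon$ this is $e^{1/\epsilon}>2/\epsilon$, the fact $e^{t}>2t$ at $t=1/\epsilon$.

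For C I would use B to handle the denominator first: by B, $\epsilon-(1-1/l)^{\tau-1}(1-1/e+\epsilon) > \epsilon-\epsilon/2 = \epsilon/2 > 0$, so the right side is positive and well defined. The numerator obeys $\tfrac1e-(1-1/l)^{\tau-1}<\tfrac1e$, so the right side is strictly below $\frac{1/e}{\epsilon/2}=\frac{2}{e\epsilon}$; finally $e>2$ gives $\frac{2}{e\epsilon}<\frac{1}{\epsilon}\le d<d+1$, which is the claim.

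All computations are routine; the one genuine step is recognising that the choice $\tau=ld+1$ forces $(1-1/l)^{\tau-1}\le e^{-d}$, after which each inequality is dominated by a clean $e^{-d}$ term and collapses to an elementary exponential-versus-linear comparison. The only place demanding care is the ordering within C: positivity of the denominator must be secured before bounding the quotient, and this is exactly what B provides, so the three parts should be proved in the order A, B, C.
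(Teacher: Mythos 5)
Your proof is correct and takes essentially the same route as the paper's: both hinge on the decay estimate $\left(1-\frac{1}{l}\right)^{\tau-1} < e^{-d}$ (from $\tau-1=ld$ and $\left(1-\frac{1}{l}\right)^{l} < e^{-1}$), reduce A to $e^{d}>1+d$, reduce B to $\frac{\epsilon}{2}>e^{-1/\epsilon}$ after dropping the factor $1-\frac{1}{e}+\epsilon \le 1$, and prove C exactly as you do, by bounding the denominator below by $\frac{\epsilon}{2}$ via B and the numerator above by $\frac{1}{e}$. The only difference is cosmetic: the paper verifies the elementary inequalities by logarithmic manipulations (e.g.\ $\ln(1+d)<d$ and $\log\left(\frac{2}{\epsilon}\right)<\frac{1}{\epsilon}$) where you use the calculus fact $e^{t}>2t$.
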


\begin{lemma}
\label{a-vs-lemma}
Let $A'$ be a solution to the instance of our problem such that $A'\not\subset V_{\mathcal{S}}$, then there is a strictly better solution $A$ such that $A\subset V_{\mathcal{S}}$.
\end{lemma}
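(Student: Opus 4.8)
The plan is to run an exchange argument: starting from $A'$, I repeatedly replace a seed that is not a subset node by a subset node, showing that each such replacement strictly increases $F_\tau$, and I stop when no non-subset seed remains. Since $A' \not\subset V_{\mathcal{S}}$, at least one replacement is performed, so the resulting $A \subset V_{\mathcal{S}}$ with $|A| \le k$ will satisfy $F_\tau(A) > F_\tau(A')$.

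Before the swaps I would record three structural facts about this construction. First, no node is ever red: every node starts uncoloured and the only seeds are blue, and no update rule creates red; hence the blue set is monotone (once a node is blue it stays blue), and adding blue seeds can only increase, under the natural coupling, each node's probability of being blue at every time. Second, each subset node $s_j$ has out-degree $0$, so it is stubborn and is blue at time $\tau$ iff it is seeded. Third, each padding node $o_i^p$ has its unique out-edge into $o_i$ and no in-edges, so it influences no other node and is blue at time $\tau$ iff it is seeded or $o_i$ is blue at time $\tau-1$. For an object node $o_i$ covered by $c_i \ge 1$ seeded subsets among its $\deg_i \le l$ out-neighbours, a direct computation gives $P(o_i \text{ blue at } t) = 1-(1-c_i/\deg_i)^{t} \ge 1-(1-1/l)^{t}$, while an uncovered, unseeded object node (and its padding) stays uncoloured.

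With these facts, the object-node swap is the crucial case. If $o_i \in A' \cap V_{\mathcal O}$, I replace it by a subset node $s_j$ containing $O_i$ (one exists since every object is coverable); if that $s_j$ is already seeded I instead add an arbitrary unused subset node, available because the seed set never contains all $l > k$ subset nodes. When seeded, $o_i$ together with its $d$ padding nodes contributes exactly $1+d$; after the swap $o_i$ is only covered, so it and its padding contribute at least $(1-(1-1/l)^{\tau}) + d(1-(1-1/l)^{\tau-1})$, while the freshly added subset node contributes at least $1$ and, by the monotone coupling, no untouched node's blue-probability decreases. The net change is therefore at least $1-(1-1/l)^{\tau} - d(1-1/l)^{\tau-1}$, which is strictly positive by item A of Lemma~\ref{ineq}. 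The padding-node swap is easier: a seeded padding node contributes exactly $1$ and helps no one, so removing it and adding a subset node loses at most $(1-1/l)^{\tau-1}$ on that node while gaining at least $1$ from the new subset node, again a strict increase.

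I would then note that every swap keeps $|A| \le k$ (one seed in, one out), and since a removed object or padding seed affects only itself and its own padding, the coupling keeps the analysis of the remaining nodes valid, so the strict gains accumulate and the final all-subset set $A$ beats $A'$. The main obstacle is precisely the probabilistic bookkeeping in the object case: a seeded object is blue with certainty whereas a covered object is blue only with probability $1-(1-1/l)^{\tau}<1$, so the swap incurs a genuine loss on $o_i$ and its padding that must be outweighed by the new subset node. This is exactly what forces the large choices $d=\max(\eps,m)$ and $\tau = ld+1$, is captured quantitatively by inequality A, and the coupling is what lets me discard side effects on the untouched nodes.
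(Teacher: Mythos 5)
Your overall strategy is essentially the paper's own proof: an exchange argument that swaps each non-subset seed for a node of $V_{\mathcal{S}}$, supported by the same structural facts (no red nodes ever appear, subset nodes are stubborn and blue iff seeded, padding nodes influence nobody, covered object nodes are blue with probability at least $1-\left(1-\frac{1}{l}\right)^{\tau}$), with strictness ultimately supplied by inequality A of Lemma~\ref{ineq}. Your object-node swap is handled correctly in both sub-cases (parent already covered or not), and your observation that a removed seed only affects itself and its own padding is exactly what justifies ignoring side effects. However, there is a genuine gap in your padding-node swap.

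Your claim that removing a seeded padding node $o_i^p$ ``loses at most $\left(1-\frac{1}{l}\right)^{\tau-1}$ on that node'' holds only if, after the swap, the parent $o_i$ is seeded or covered by some seeded subset. In the sub-case where $o_i$ is neither seeded nor covered (the paper's Case 3), your own structural fact says that $o_i$ and its padding then stay uncoloured forever, so removing $o_i^p$ loses exactly $1$. If you then add an \emph{arbitrary} unused subset node, the gain can also be exactly $1$ --- for instance when every object node adjacent to that subset node is already seeded, the new seed contributes $1$ and changes nothing else --- so the net change is $0$, which does not give the strict improvement the lemma demands. The repair is precisely what the paper does in its Case 3, and what you already do in the object-node case: choose the replacement to be a subset node $s_j$ with $O_i \in S_j$ (one exists by the coverage assumption, and it is necessarily unused because $o_i$ is uncovered). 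Then, in addition to the $+1$ from $s_j$, you gain at least $1-\left(1-\frac{1}{l}\right)^{\tau}$ on $o_i$ and at least $1-\left(1-\frac{1}{l}\right)^{\tau-1}$ on each padding node, so the exchange is strictly improving. With that single correction, your proof goes through and coincides with the paper's argument.
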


The above lemma implies that any solution $A'\not\subset V_{\mathcal{S}}$ can be improved (in fact, in linear time, as discussed in the proof) to a strictly better solution which has only nodes from $V_\mathcal{S}$. Particularly, the optimal solution picks only nodes from $V_\mathcal{S}$. In the following, we use this lemma to focus on the solutions where the seed set $A$ is a subset of $V_{\mathcal{S}}$.

Let $am$ be the value of a solution $A\subset V_{\mathcal{S}}$ of size $k$ in our problem and $mc$ be the corresponding value for the Maximum Coverage problem, where set $S_j$ is picked if and only if $s_j\in A$. We establish a connection between $am$ and $mc$. Firstly, it's trivial that $am \leq k + mc + d \ mc$ since by our construction at most $k + mc + d \ mc$ nodes are made blue by $A$, regardless of $\tau$. Secondly, exactly $mc$ nodes $v$ in $V_\mathcal{O}$ satisfy $|N_+(v) \cap A| \neq 0$. Then, $k$ nodes are blue with probability $1$, $mc$ nodes are blue with probability at least $1-\left(1-\frac{1}{l}\right)^\tau$ and $d \ mc$ nodes are blue with probability at least $1-\left(1-\frac{1}{l}\right)^{\tau-1}$. By linearity, we have $k + mc \left(1-\left(1-\frac{1}{l}\right)^\tau \right) + d \ mc \left(1-\left(1-\frac{1}{l}\right)^{\tau-1}\right) \leq am$. Using the two aforementioned inequalities and some simple calculations, we can establish a connection between $am$ and $mc$, as presented in the lemma below.


\begin{lemma}
\label{lemma4}

\begin{equation*}
        \left(1-\left(1-\frac{1}{l}\right)^{\tau-1}\right)\frac{am-k}{D} < mc \leq \frac{am-k}{D}
\end{equation*}
where $D := 1-\left(1-\frac{1}{l}\right)^\tau+d\left(1-\left(1-\frac{1}{l}\right)^{\tau-1}\right)$.
\end{lemma}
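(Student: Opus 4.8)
The plan is to obtain both bounds by treating $mc$ as the unknown in the two sandwiching inequalities on $am$ that were established in the paragraph immediately preceding the lemma, and solving for $mc$. Collecting the coefficient of $mc$ in the lower bound, that coefficient is exactly $D = \left(1-\left(1-\frac{1}{l}\right)^{\tau}\right) + d\left(1-\left(1-\frac{1}{l}\right)^{\tau-1}\right)$, so the lower bound can be written compactly as $am \geq k + D\cdot mc$. Rearranging this — and noting that $D>0$, since both of its summands are strictly positive because $0 < 1-\frac{1}{l} < 1$ (using $l\geq 2$) — immediately yields the right-hand inequality $mc \leq \frac{am-k}{D}$.

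For the left-hand inequality I would start from the upper bound $am \leq k + mc + d\,mc = k + (1+d)\,mc$, which rearranges to $mc \geq \frac{am-k}{1+d}$. It then suffices to prove $\frac{am-k}{1+d} > \left(1-\left(1-\frac{1}{l}\right)^{\tau-1}\right)\frac{am-k}{D}$. Since any size-$k$ seed set drawn from $V_{\mathcal{S}}$ covers at least one object (so $mc\geq 1$), the lower bound forces $am>k$, and hence the common factor $am-k$ is strictly positive. Cross-multiplying by the positive quantities $1+d$ and $D$ therefore preserves the direction of the inequality and reduces the claim to the purely numerical statement $D > (1+d)\left(1-\left(1-\frac{1}{l}\right)^{\tau-1}\right)$.

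The final step verifies this numerical inequality. Expanding both sides and cancelling the common terms $1 + d - d\left(1-\frac{1}{l}\right)^{\tau-1}$, it collapses to $\left(1-\frac{1}{l}\right)^{\tau-1} > \left(1-\frac{1}{l}\right)^{\tau}$, equivalently $\left(1-\frac{1}{l}\right)^{\tau-1}\cdot\frac{1}{l} > 0$, which holds because $0 < 1-\frac{1}{l} < 1$. Chaining the three facts $mc \geq \frac{am-k}{1+d}$, the verified numerical inequality, and $am-k>0$ gives the strict left-hand bound, completing the argument.

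Honestly, the whole proof is elementary algebra, so there is no deep obstacle; the only places demanding care are \textbf{(i)} tracking the sign of every quantity I cross-multiply by — all of $D$, $1+d$, and $am-k$ must be shown positive, which is where $l\geq 2$ and $mc\geq 1$ enter — and \textbf{(ii)} the strictness of the left inequality, which is genuine (not an artifact) precisely because $\left(1-\frac{1}{l}\right)^{\tau-1}>\left(1-\frac{1}{l}\right)^{\tau}$ holds strictly and $am-k>0$ strictly, the latter failing only in the degenerate case $mc=0$ that is ruled out by $k\geq 1$ together with the non-emptiness of the sets $S_j$.
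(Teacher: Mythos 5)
Your proposal is correct and follows essentially the same route as the paper's own proof: both directions come from rearranging the two sandwiching bounds $k + D\,mc \leq am \leq k + (1+d)\,mc$, with the strict left inequality reducing to the same key fact $\left(1-\frac{1}{l}\right)^{\tau} < \left(1-\frac{1}{l}\right)^{\tau-1}$, i.e.\ $D > (1+d)\left(1-\left(1-\frac{1}{l}\right)^{\tau-1}\right)$. Your explicit tracking of the positivity of $D$ and of $am-k$ (needed for strictness) is a point the paper leaves implicit, so it is a welcome refinement rather than a deviation.
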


We'll need one last result before we prove the theorem.
\begin{lemma}
\label{noopt}
Suppose $A\subset V_{\mathcal{S}}$ gives the optimal solution for the Adoption Maximisation problem, then the corresponding set $\{S_j: s_j\in A\}$ is an optimal solution for the Maximum Coverage problem.

\end{lemma}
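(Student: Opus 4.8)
The plan is to argue by contradiction, using the quantitative two-sided link between the Adoption Maximisation value $am$ and the Maximum Coverage value $mc$ furnished by Lemma~\ref{lemma4}. By Lemma~\ref{a-vs-lemma} I may assume the optimal seed set $A$ lies in $V_{\mathcal{S}}$, and since adding blue seeds never decreases the expected blue count I may take $|A| = k$; write $mc$ for the coverage of the family $\{S_j : s_j \in A\}$. Suppose this family were \emph{not} optimal for Maximum Coverage. Then some size-$k$ set $A^\ast \subset V_{\mathcal{S}}$ has associated coverage $mc^\ast > mc$, and because coverage counts objects it is integral, so $mc^\ast \geq mc + 1$. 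The goal is to show that $A^\ast$ then beats $A$ in the AM objective, contradicting the optimality of $A$.

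To compare the objectives I apply Lemma~\ref{lemma4} to each set, noting first that $D > 0$ and, writing $\alpha := \left(1-\frac{1}{l}\right)^{\tau-1}$, that $1-\alpha > 0$, so that no inequality reverses. The lemma gives the lower bound $am^\ast \geq k + mc^\ast D$ for the competitor and the upper bound $am < k + \frac{mc\,D}{1-\alpha}$ for the assumed optimum. Hence it suffices to prove the single inequality $mc^\ast(1-\alpha) \geq mc$: this yields $\frac{mc\,D}{1-\alpha} \leq mc^\ast D$, so that $am < k + \frac{mc\,D}{1-\alpha} \leq k + mc^\ast D \leq am^\ast$, the desired contradiction.

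The remaining inequality is exactly where the parameter choices do the work. Since $\tau - 1 = ld$ and $\left(1-\frac{1}{l}\right)^{l} < \frac{1}{e}$ for $l \geq 2$, I bound $\alpha = \left(\left(1-\frac{1}{l}\right)^{l}\right)^{d} < e^{-d} \leq e^{-m}$, using $d = \max(\eps, m) \geq m$. Because one cannot cover more than $m$ objects, $mc^\ast \leq m$, and therefore $mc^\ast \alpha < m\,e^{-m} \leq 1$. Consequently $mc^\ast(1-\alpha) = mc^\ast - mc^\ast \alpha > mc^\ast - 1 \geq mc$, which establishes the needed inequality and closes the argument.

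I expect the main thing to get right to be the orientation of the two bounds from Lemma~\ref{lemma4} — the lower bound must be invoked for $A^\ast$ and the upper bound for $A$ — together with the observation that the exponential decay of $\alpha$ forced by $d \geq m$ and $\tau = ld+1$ dominates the merely linear factor $mc^\ast \leq m$. Once those two points are in place the rest is routine algebra; the same computation in fact shows, more strongly, that the AM value is a strictly increasing function of coverage over size-$k$ subsets of $V_{\mathcal{S}}$, which is precisely what makes the reduction optimum-preserving.
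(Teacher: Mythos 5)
Your proof is correct and takes essentially the same route as the paper's: a contradiction argument that combines integrality ($mc^\ast \geq mc+1$), the lower bound $am^\ast \geq k + mc^\ast D$ for the competitor, an upper bound on $am$ rooted in the counting bound $am \leq k + (1+d)\,mc$, and the exponential smallness of $\left(1-\frac{1}{l}\right)^{\tau-1} < e^{-d} \leq e^{-m}$ played against $mc \leq m$. The only cosmetic difference is that you invoke the two inequalities of Lemma~\ref{lemma4} as a black box (establishing $mc^\ast\left(1-\frac{1}{l}\right)^{\tau-1} < 1$), whereas the paper re-derives the same comparison from the raw probabilistic and counting bounds underlying Lemma~\ref{lemma4} (establishing $(MC+1)\left(1-\frac{1}{l}\right)^{\tau-1} < 1$); the substance is identical.
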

Consider a polynomial time algorithm $ALG_{AM}$ with approximation ratio of $\left(1 - \frac{1}{e} + \epsilon \right)$ for the Adoption Maximisation problem and some $\epsilon>0$. Then, we design an algorithm $ALG_{MC}$ which transforms a given instance of the Maximum Coverage problem to an instance of the Adoption Maximization following our polynomial time construction, runs $ALG_{AM}$ on this construction, and translates the outcome to a solution of the Maximum Coverage as explained above. We claim that $ALG_{MC}$ has an approximation ratio better than $1-\frac{1}{e}$, which we know to not be possible unless $NP \subseteq DTIME(n^{O(\log \log n)})$, cf.~\cite{khuller1999budgeted}. Thus, it remains to prove this claim.

Let $\overline{AM}$ be the solution produced by $ALG_{AM}$ and $\overline{MC}$ be the solution produced by $ALG_{MC}$ as described above. Also, let $AM$ and $MC$ denote the optimal solutions. We then can use Lemma~\ref{lemma4} to connect the values of $\overline{AM}$ and $\overline{MC}$, and additionally Lemmas~\ref{a-vs-lemma} and~\ref{noopt} to establish the connection between $AM$ and $MC$. Leveraging these two connections and some small calculations (please see Appendix \ref{cal-hardness-appendix} for omitted calculations in the rest of the proof), we get

\begin{equation}
\label{eq-star}
\begin{split}
\overline{MC} &> \left(1-\frac{1}{e}\right)MC+\epsilon MC - \left(1-\frac{1}{l}\right)^{\tau-1} \left(1 - \frac{1}{e} + \epsilon \right)MC \\ & + \frac{\left(\epsilon-\frac{1}{e}\right)k}{D}\left(1-\left(1-\frac{1}{l}\right)^{\tau-1}\right) \hfill
\end{split}
\end{equation}
Furthermore, by using the inequalities in Lemma~\ref{ineq} and some calculations, we get 

\begin{equation}
\label{eq-cal-2}
\epsilon - \left(1-\frac{1}{l}\right)^{\tau-1}\left(1-\frac{1}{e}+\epsilon\right) > \frac{\left(\frac{1}{e} - \epsilon\right)\left(1-\left(1-\frac{1}{l}\right)^{\tau-1}\right)}{D}   
\end{equation}

Finally, by applying Equation~\eqref{eq-cal-2} and the fact that the LHS of Equation~\eqref{eq-cal-2} is positive according to Lemma~\ref{ineq} (B), we can show that the RHS of Equation~\eqref{eq-star} is larger than $(1-1/e)MC$. Thus, we can conclude that $\overline{MC} > \left(1-\frac{1}{e}\right)MC$, that is, the polynomial time algorithm $ALG_{MC}$ has an approximation ratio better than $1-1/e$. $\hfill \square$

\subsection{Greedy Algorithm}
\label{greedysection}

In this section, we provide a greedy algorithm whose approximation ratio matches the lower bound proven in the previous section and runs in polynomial time when $\tau=\textrm{poly}(n)$. Let us start by proving some properties of the objective function $F_t(A)$. 

\begin{theorem}
\label{monsub}
    $F_t(A)$ is monotone and submodular.
\end{theorem}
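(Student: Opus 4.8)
The plan is to put all of the processes---one for every candidate seed set---on a common probability space, namely the collection of i.i.d.\ choices $X_t(u)\in N_+(u)$ made by each node $u$ in each round $t$, where $X_t(u)$ is drawn proportional to the (normalised) out-edge weights. These choices do not depend on the seed set, so conditioning on a realisation $\phi=(X_t(u))_{u,t}$ makes every process deterministic. Writing $g_v^\phi(A)=\mathbb{1}[v\text{ is blue at time }t\text{ with seed set }A]$ under realisation $\phi$, we have $F_t(A)=\mathbb{E}_\phi\big[\sum_{v\in V} g_v^\phi(A)\big]$. Since a nonnegative combination (here a sum followed by an expectation) of monotone functions is monotone and of submodular functions is submodular, it suffices to show that for every fixed $\phi$ and every node $v$ the Boolean function $A\mapsto g_v^\phi(A)$ is monotone and submodular.

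For monotonicity I would introduce the total order $r\prec u\prec b$ on colours and observe that the one-step update $f(c_v,c_w)=c_w$ if $c_w\neq u$ and $f(c_v,c_w)=c_v$ otherwise is monotone non-decreasing in \emph{both} arguments with respect to this order (a short three-by-three check: $f(\cdot,r)\equiv r$, $f(\cdot,b)\equiv b$, $f(\cdot,u)=\mathrm{id}$, and for fixed first argument the values at $r,u,b$ are non-decreasing). Fixing $\phi$ and coupling two initial colourings $S_0\preceq S_0'$ coordinate-wise, an induction on $t$ using that each node recomputes its colour as $f$ of its own colour and that of its (commonly chosen) picked neighbour yields $S_t\preceq S_t'$ for all $t$. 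Because adding blue seeds only raises coordinates to the top element $b$, we get $S'_A\preceq S'_B$ whenever $A\subseteq B$, and since $b$ is the unique maximum, any node that is blue under $A$ is blue under $B$; hence $g_v^\phi(A)\le g_v^\phi(B)$ and $F_t$ is monotone.

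For submodularity the clean case to isolate first is when the base colouring $S$ uses no red (e.g.\ all nodes initially uncoloured), so every coloured node is necessarily blue. Then $g_v^\phi(A)=\mathbb{1}[v\in K_t]$, where $K_t$ is the set of coloured nodes, which obeys the deterministic recurrence $K_{t+1}=K_t\cup\{u:X_t(u)\in K_t\}$ started from $K_0=\mathrm{supp}(S)\cup A$. The one-step operator $\Phi_t(K):=K\cup\{u:X_t(u)\in K\}$ is \emph{union-preserving}, since
\[
\Phi_t(K\cup K') = (K\cup K')\cup\{u: X_t(u)\in K\cup K'\} = \Phi_t(K)\cup\Phi_t(K'),
\]
using that $X_t(u)$ is a single node. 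Composing, $K_t(\mathrm{supp}(S)\cup A)=K_t(\mathrm{supp}(S))\cup\bigcup_{a\in A}K_t(\{a\})$, so the number of blue nodes at time $t$ is a coverage function $\big|\bigcup_{a\in A}R_a\big|$ (up to the fixed base contribution), which is monotone and submodular by the classical argument.

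The \textbf{main obstacle} is extending this to a general base colouring containing red nodes, where ``blue'' and ``coloured'' diverge: a red node is coloured and therefore still redirects the colour-tracing pointer $P_t(u)=X_{t-1}(u)$ if $X_{t-1}(u)$ is coloured and $P_t(u)=u$ otherwise, so it can ``steal'' a node's source away from blue, and the winning source now depends on $A$ through the entire coloured set. My route would be a coupled exchange argument run simultaneously on the four processes $A,\,A\cup\{x\},\,B,\,B\cup\{x\}$ with $A\subseteq B$ and $x\notin B$: union-preservation gives $K_t(\cdot\cup\{x\})=K_t(\cdot)\cup K_t(\{x\})$ with the \emph{same} increment $K_t(\{x\})$ in both comparisons, and since $K_t^A\subseteq K_t^B$ one has $K_t(\{x\})\setminus K_t^B\subseteq K_t(\{x\})\setminus K_t^A$, so adding $x$ creates weakly more new redirections under $A$ than under $B$. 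The remaining work---and the genuinely delicate step---is to track how these extra redirections propagate through the trace-back to show the marginal-gain inequality $g_v^\phi(A\cup\{x\})-g_v^\phi(A)\ge g_v^\phi(B\cup\{x\})-g_v^\phi(B)$ for the blue indicator in the presence of red ``blockers'', i.e.\ to prove that if $x$ turns $v$ blue alongside the larger set $B$ it must already do so alongside $A$. I expect this source-tracking through colour-dependent redirections to require the novel combinatorial argument the introduction alludes to, with the no-red coverage structure above as the backbone.
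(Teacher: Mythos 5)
Your first step (conditioning on the realised picks and reducing to per-realisation monotonicity and submodularity) is exactly the paper's pick-sequence decomposition, and your monotonicity argument is correct --- in fact your lattice argument with the order $r\prec u\prec b$ and the coordinatewise-monotone update $f$ is a slightly different, arguably cleaner route than the paper's. However, the submodularity half has a genuine gap, and you say so yourself: you only prove it when the base colouring contains no red nodes (the coverage/union-preservation argument), and for the general case you offer a sketched ``coupled exchange argument'' whose crucial step --- showing that if $x$ turns $v$ blue alongside $B$ it already does so alongside $A$, in the presence of red blockers --- is left unresolved. Since the hardness and the greedy guarantee of the paper are claimed for arbitrary initial colourings (the experiments even seed red nodes), this case is the whole point, and your proposal does not close it.

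The missing idea, which the paper supplies, is to make the trace-back \emph{colouring-independent}. Your pointer $P_t(u)$ redirects only when the picked node is coloured, so the trace-back path depends on the seed set --- that is what makes your exchange argument delicate. Instead, record the entire priority list: define recursively $\gamma_0(v)=v$ and $\gamma_t(v)=\gamma_{t-1}(PS_\tau(v,t))\,\Vert\,\gamma_{t-1}(v)$ (picked neighbour's history first, own history second). This sequence depends only on the realisation $\phi$, never on colours, and one proves by induction that the colour of $v$ at time $t$ is the \emph{initial} colour of the \emph{first initially-coloured node} in $\gamma_t(v)$. Red nodes are then harmless: they occupy the same blocking positions in all four processes $A$, $A\cup\{x\}$, $B$, $B\cup\{x\}$, and seeding only converts some uncoloured entries to blue. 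Submodularity becomes a short set argument: if $v$ is blue under $B\cup\{x\}$ but not under $A$, its witness $u$ (the first coloured entry) cannot be blue in the base colouring (else $v$ would already be blue under $A$), so $u\in B\cup\{x\}=B\cup(A\cup\{x\})$, all earlier entries are uncoloured in the base colouring and lie outside $B\cup\{x\}$, hence $v$ is blue under $B$ or under $A\cup\{x\}$; counting then gives the marginal-gain inequality. Your no-red coverage structure is exactly the special case of this lemma where ``first coloured'' degenerates to ``first seeded or initially blue'', so the fix is a strengthening of your own backbone rather than a new framework.
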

\textit{Proof.} To prove this, we will introduce the notation of pick sequence, inspired by~\cite{viral}. 
\begin{definition}[Pick Sequence]
    A pick sequence of length $\tau$ is a function $PS_\tau : V \times [1,\tau] \cap \mathbb{N} \rightarrow V$ where $PS_\tau(v,t)$ is the node that node $v$ picks at round $t$. $R_\tau$ is the set of all pick sequences of length $\tau$. 
\end{definition}
Let $\Pr[PS_\tau]$ be the probability that the picks made by the nodes in the first $\tau$ rounds follow the pick sequence $PS_\tau$. For $A \subset V$ let $B^{PS_\tau}(A)$ and $F^{PS_\tau}_\tau(A) = |B^{PS_\tau}(A)|$ be the set of blue nodes and the number of blue nodes at time $\tau$ for the system $\Omega_A$ if the pick sequence $PS_\tau$ is followed. Then, $$F_\tau(A) = \sum_{PS_\tau \in R_\tau}\Pr(PS_\tau)F^{PS_\tau}_\tau(A)$$
Since monotonicity and submodularity are preserved under linear combinations, it suffices to show that for each $PS_\tau$, $F^{PS_\tau}_\tau$ is monotone and submodular.
\begin{definition}[Node Sequence]
    For a pick sequence $PS_\tau$ and each time $0 \leq t \leq \tau$, the node sequence up to $t$ is the function $\gamma_t : V \rightarrow V^{2^t}$ defined recursively by $\gamma_0(v) = v \ \forall \ v \in V$ and $\gamma_t(v) = \gamma_{t-1}(PS_\tau(v,t)) || \gamma_{t-1}(v)$ where $||$ stands for concatenation.
\end{definition}
The node sequence captures the idea of tracking the neighbouring node selected in each round and the nodes selected by the neighbouring node in previous rounds to get the colour of the node at round $t$. The following lemma formalises this idea. Its proof follow a simple inductive argument detailed in Appendix \ref{nodseq-appendix}.
\begin{lemma}
\label{nodseq}
The colour of a node at time $t$ is the time 0 colour of the first coloured node in its node sequence. A node is uncoloured if all the nodes in its node sequence were initially uncoloured.
\end{lemma}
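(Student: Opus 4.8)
The plan is to prove the statement by induction on the time $t$, treating the pick sequence $PS_\tau$ as fixed so that the colouring $S_0, S_1, \ldots, S_\tau$ is a deterministic trajectory (the probabilistic statement then follows from the linear-combination argument already set up before the lemma). Throughout I would write $w := PS_\tau(v,t)$ for the neighbour that $v$ picks in the round producing $S_t$ from $S_{t-1}$, so that the update rule reads $S_t(v) = S_{t-1}(w)$ when $S_{t-1}(w) \neq u$, and $S_t(v) = S_{t-1}(v)$ otherwise. The key is to align this two-branch rule with the two blocks of the recursion $\gamma_t(v) = \gamma_{t-1}(w) \,||\, \gamma_{t-1}(v)$.

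For the base case $t = 0$ we have $\gamma_0(v) = v$, a sequence of length one, whose first coloured node (if any) is $v$ itself, with time-$0$ colour exactly $S_0(v)$; and $\gamma_0(v)$ contains no coloured node precisely when $v$ is uncoloured at time $0$. Both assertions hold trivially.

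For the inductive step I would assume the claim for $t-1$ and all nodes, and split on whether the left block $\gamma_{t-1}(w)$ contains a coloured node. If it does, then by the induction hypothesis $S_{t-1}(w)$ is the time-$0$ colour of the first coloured node of $\gamma_{t-1}(w)$, which is also the first coloured node of the whole concatenation $\gamma_t(v)$; since $S_{t-1}(w) \neq u$, the update rule gives $S_t(v) = S_{t-1}(w)$, matching the claim. If $\gamma_{t-1}(w)$ has no coloured node, the hypothesis forces $S_{t-1}(w) = u$, so the rule gives $S_t(v) = S_{t-1}(v)$; the first coloured node of $\gamma_t(v)$, if any, then lies in the right block $\gamma_{t-1}(v)$, whose time-$0$ colour is $S_{t-1}(v) = S_t(v)$ by induction. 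The uncoloured statement drops out of the same analysis: $\gamma_t(v)$ has no coloured node exactly when neither block does, which by induction is exactly when $S_{t-1}(w) = S_{t-1}(v) = u$, and then $S_t(v) = S_{t-1}(v) = u$.

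The argument is genuinely routine, so I do not expect a substantial obstacle; the only points requiring care are bookkeeping rather than depth. First, one must respect the left-to-right order of the concatenation, since that order is precisely what encodes the precedence ``the picked neighbour's colour overrides, but only when that neighbour is itself coloured.'' Second, the edge case of a stubborn node $v$ with no out-neighbours needs the convention that such a node picks itself, $PS_\tau(v,t) = v$; with this convention the rule $S_t(v) = S_{t-1}(v)$ is reproduced by the recursion and the induction goes through verbatim. Aligning the two branches of $\gamma_t$ with the two branches of the update rule is the whole content of the proof.
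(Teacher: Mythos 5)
Your proof is correct and takes essentially the same approach as the paper's: induction on $t$, with a case split that aligns the two branches of the update rule with the two blocks of the concatenation $\gamma_t(v) = \gamma_{t-1}(w)\,||\,\gamma_{t-1}(v)$. Your write-up is, if anything, slightly more careful than the paper's (the explicit split on whether the left block contains a coloured node, and the self-pick convention for stubborn nodes), but the underlying argument is identical.
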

To prove monotonicity, consider $A\subset A'\subset V$. For any node $v$, if $v \in B^{PS_\tau}(A)$, then there is some node $u$ in $\gamma_\tau(v)$ that is blue, and all nodes before it are uncoloured. Then, either $u \in A$ or $u$ is blue in $\Omega$ and all nodes before $u$ are uncoloured in $\Omega$ and not in $A$. Since $A \subset A'$, $u \in A'$ so $u$ is blue in $\Omega_{A'}$ and all nodes before $u$ are either in $A'$ in which case they are blue in $\Omega_{A'}$ or not in $A'$ in which case they are uncoloured in $\Omega_{A'}$. Thus, there exists a coloured node in $\gamma_\tau(v)$ according to $\Omega_{A'}$ and the first coloured node is blue. Thus, $v \in B^{PS_\tau}(A')$. Since $v$ was arbitrary, $B^{PS_\tau}(A) \subset B^{PS_\tau}(A')$ and in particular, $F_\tau^{PS_\tau}(A) \leq F_\tau^{PS_\tau}(A')$.

For submodularity, consider $A \subset A' \subset V$ and any node $w$. By our previous result, $B^{PS_\tau}(A) \subseteq B^{PS_\tau}(A')$, $B^{PS_\tau}(A \cup \{w\})$, and $B^{PS_\tau}(A'\cup \{w\})$. Consider a node $v \in B^{PS_\tau}(A'\cup \{w\}) \setminus B^{PS_\tau}(A)$. We claim that it is in at least one of $B^{PS_\tau}(A \cup \{w\})$ or $B^{PS_\tau}(A')$. To see this, consider $\gamma_\tau(v)$. There is some $u$ which is blue in $\Omega_{A' \cup \{w\}}$ and all nodes before it are uncoloured. $u$ is either in $A' \cup \{w\}$ or it is blue in $\Omega$. All nodes before it are not in $A' \cup \{w\}$ and are uncoloured in $\Omega$. If $u$ is blue in $\Omega$, then all nodes until $u$ are not in $A$ since $A \subset A' \cup \{w\}$ and the first uncoloured node in $\gamma_\tau(v)$ is blue, so $v \in B^{PS_\tau}(A)$ contradicting our assumption. Hence, $u \in A' \cup \{w\}$. Since $A' \cup \{w\} = A' \cup (A \cup \{w\})$, $u$ is in one of $A \cup \{w\}$ or $A'$. All nodes before $u$ are in neither since they are both subsets of $A' \cup \{w\}$. Thus, $v$ belongs to one of $B^{PS_\tau}_\tau(A\cup \{w\})$ or $B^{PS_\tau}_\tau(A')$. By our assumption, it isn't in $B^{PS_\tau}(A)$. So it's in one of $B^{PS_\tau}_\tau(A\cup \{w\}) \setminus B^{PS_\tau}(A)$ or $B^{PS_\tau}(A') \setminus B^{PS_\tau}(A)$. Because $v$ was arbitrary, $B^{PS_\tau}(A'\cup \{w\}) \setminus B^{PS_\tau}(A) \subset \left(B^{PS_\tau}(A') \setminus B^{PS_\tau}(A) \cup B^{PS_\tau}(A\cup \{w\}) \setminus B^{PS_\tau}(A)\right)$. Since by our monotonicity result, $B^{PS_\tau}(A)$ is a subset of all the other 3, we get $F^{PS_\tau}_\tau(A'\cup \{w\}) - F^{PS_\tau}_\tau(A) \leq \left(F^{PS_\tau}_\tau(A') - F^{PS_\tau}_\tau(A)\right) + \left( F^{PS_\tau}_\tau(A\cup \{w\}) - F^{PS_\tau}_\tau(A)\right)$. Rearranging, we get $F^{PS_\tau}_\tau(A'\cup \{w\}) - F^{PS_\tau}_\tau(A') \leq F^{PS_\tau}_\tau(A \cup \{w\}) - F^{PS_\tau}_\tau(A)$, thus proving submodularity. This concludes the proof of the theorem $\hfill \square$

\textbf{Computing Objective Function.} Before discussing the greedy algorithm, we need to explain how to compute $F_\tau(A)$ for a set $A$. The idea of fixing picks was good for theory, but since the number of pick sequences grows exponentially in time, it very quickly becomes impractical to compute.
We instead bring back the notion of probability vector $S_t$.
We can then calculate the probability of picking colour blue at round $t$ as $P_t(v, b)=\sum_{w \in N_+(v)}\omega(v,w)S_{t-1}(w,b)$ and analogously for red and uncoloured. This is just the dot product of the row vector of $v$ in the (normalised) adjacency matrix $H$ of the graph and the blue column of $S_{t-1}$. Thus, we can obtain the colour distribution probability for all the nodes as $P_t = H \times S_{t-1}$ where row $i$ of $P_{t}$ is the triple $P_t(v_i, b), P_t(v_i, r), P_t(v_i, u)$ in round $t$. Since the picking is independent across rounds, these probabilities are independent of $S_{t-1}(v)$ and we can find $S_t(v)$. For example, $S_t(v,b) = S_{t-1}(v,b)(1-P_t(v,r)) + (1-S_{t-1}(v,b))P_t(v,b)$. This concludes one round, and repeating this procedure $\tau$ times can give us $S_\tau$, starting from $\Omega_A$. Then we can add the blue column of $S_\tau$ to get $F_\tau(A)$. Multiplying an $n \times n$ matrix with an $n \times 3$ matrix can be done in $O(n^2)$ and calculating $S_t$ from $P_t$ can be done in $O(n)$. Thus, one step of calculations costs $O(n^2)$. Doing this $\tau$ times to calculate $S_\tau$ costs $O(n^2\tau)$. Then, we can add the blue column in $O(n)$. Thus, the overall cost of calculating $F_\tau(A)$  is $O(n^2\tau)$.

\textbf{Proposed Algorithm.} The greedy algorithm works by iteratively selecting and adding the node which gives the highest increase to the expected number of blue nodes in time $\tau$. In each iteration, the algorithm needs to compute the value of $\mathcal{F}_{\tau}(\cdot)$ if each non-selected node was added to the seed set. Since there are at most $n$ such nodes to be checked, we have $k$ iterations overall, and computing $\mathcal{F}_{\tau}(\cdot)$ takes $O(n^2\tau)$ as discussed above, we can conclude that the run time of the algorithm is $O(n^3k\tau)$, which is polynomial when $\tau=\textrm{poly}(n)$. Furthermore, as the objective function is both monotone and submodular, the greedy algorithm achieves an approximation ratio of $1-1/e$, cf.~\cite{nemhauser1978analysis}.



\subsection{Experimental Comparison} 

\textbf{Data.} 
We have used real-world social network data available on SNAP database~\cite{snapnets}, including Facebook 0 ($n=347$, $m=5,038$), Twitter ($n=475$, $m=13,289$), Facebook 414 ($n=685$, $m=3,386$), Wikipedia ($n=4,592$, $m=119,882$), Bitcoin OTC ($n=6,005$, $m=35,592$), Gnutella ($n=6,300$, $m=20,777$), and Bitcoin Alpha ($n=7,604$, $m=24,186$).

\noindent \textbf{Comparison.} We compare the performance of our proposed greedy algorithm against the following centrality-based methods, where we pick the nodes with: the highest \textbf{in degree}, highest \textbf{out degree}, highest \textbf{closeness} centrality, highest \textbf{betweenness} centrality, and highest (\textbf{pagerank}) centrality. A suffix of ``red'' to any of the strategies implies that the strategy prioritises converting the red nodes to blue, so selects the highest ranked red nodes.

We observe that our algorithm not only enjoys a theoretical guarantee (as proven in the previous section), but also outperforms other algorithms on real-world data. Please see Figure~\ref{fig:graphs} for Facebook 0, Facebook 414, and Bitcoin OTC. The results for other networks are similar and are given in Appendix \ref{appendix:exp}.

\begin{figure}[h]
    \includegraphics[width=0.45\linewidth]{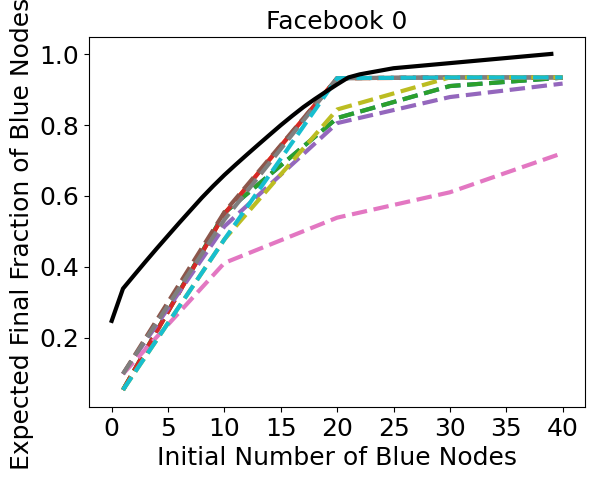}
    \includegraphics[width=0.45\linewidth]{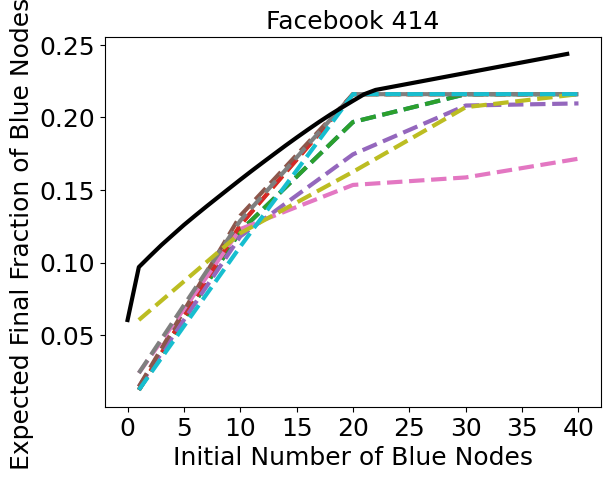} \\
    \includegraphics[width=0.45\linewidth]{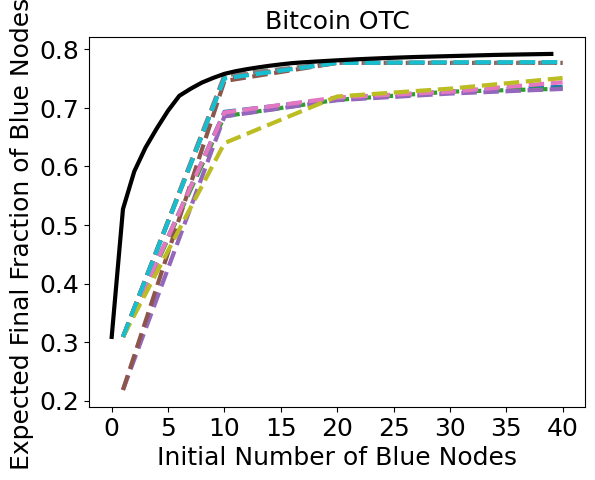} \hspace{0.15\linewidth}
    \includegraphics[width=0.3\linewidth]{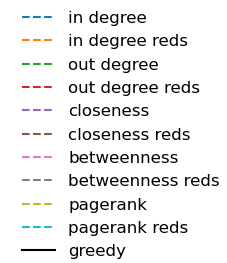} 
    \caption{Performance of Greedy algorithm against some well known centrality measures. In each graph, the final expected fraction of blue nodes is plotted against the budget. Each run has 20 red nodes and a budget varying from 1 to 40 for $\tau=20$ rounds.}
    \label{fig:graphs}
\end{figure}

    
    

\section{Convergence Properties}
\label{convergence-sec}

So far, we let the number of rounds be an input parameter $\tau$. But what kind of configurations does our model converge to, and how long does it take to reach convergence? In this section, we study the convergence period and convergence time of our process (please see Section~\ref{preliminaries:sec} for their formal definition).

\subsection{Convergence Period}

We show that if $G$ is strongly connected, the period is smaller than the HCF (highest common divisor) of the length of all cycles in $G$. However, the period can be exponential if strong connectivity condition is relaxed. 

\subsubsection{Strongly Connected Graphs}

\begin{theorem}
\label{strper}
Let the period $\gamma$ of a graph be the HCF of the lengths of all its cycles. For a strongly connected graph $G$, the period of convergence divides $\gamma$.
\end{theorem}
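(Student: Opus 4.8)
\textit{Proof Proposal.} The plan is to exploit the cyclic (imprimitive) structure of strongly connected digraphs. Recall the classical fact that for a strongly connected $G$ the period $\gamma$ equals the index of imprimitivity: the nodes partition as $V = V_0 \cup V_1 \cup \dots \cup V_{\gamma-1}$ into $\gamma$ \emph{layers} so that every edge runs from some $V_i$ to $V_{i+1 \bmod \gamma}$, and the $\gamma$-step graph $G^{\gamma}$ is primitive (strongly connected and aperiodic) on each $V_i$. First I would reduce the state space. Since $G$ is strongly connected every node has out-degree at least one, so there are no stubborn nodes and a coloured node can never revert to uncoloured; hence the set of coloured nodes is non-decreasing. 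The all-uncoloured configuration is a single fixed point forming a leaf of size $1$ (which divides $\gamma$). Otherwise, strong connectivity forces an edge from an uncoloured node to a coloured one whenever uncoloured nodes remain, so with positive probability the number of coloured nodes strictly increases; thus the process almost surely reaches an all-coloured configuration and can never return to one with fewer coloured nodes. Consequently every non-trivial leaf consists entirely of all-coloured configurations, and it remains to analyse these.

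Next I would isolate the \emph{layer-constant} configurations $\mathcal{L}$: all-coloured configurations assigning a single colour $x_i \in \{r,b\}$ to each layer $V_i$. On $\mathcal{L}$ the dynamics is deterministic, since a node of $V_i$ only sees out-neighbours in $V_{i+1}$, all of colour $x_{i+1}$, and so becomes $x_{i+1}$ regardless of its random pick. Hence the update acts on the colour-tuple $(x_0,\dots,x_{\gamma-1})$ as the cyclic shift $x_i \mapsto x_{i+1 \bmod \gamma}$. This map is a bijection of $\mathcal{L}$ whose $\gamma$-th power is the identity, so $\mathcal{L}$ is closed and decomposes into deterministic cycles, each an absorbing strongly connected component whose size is the rotation-orbit length of a length-$\gamma$ colour necklace, and such a length always divides $\gamma$.

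The crux, which I expect to be the main obstacle, is to show that every all-coloured configuration reaches $\mathcal{L}$ with positive probability. Viewing a $K$-round run as a composition $\Phi = \phi_1 \circ \dots \circ \phi_K$ of single-step choice maps $\phi_r(v) \in N_+(v)$, the time-$K$ colour of $v$ is the initial colour of $\Phi(v)$; taking $K$ a multiple of $\gamma$ makes $\Phi$ preserve each layer, so it suffices to choose the $\phi_r$ so that $\Phi$ collapses every $V_i$ to a single node. I would do this greedily: picture one token per node walking forward under the $\phi_r$, and track, for each layer, the set of nodes currently occupied by tokens that started in $V_i$. Whenever such a set contains two nodes $a,b$ (necessarily in a common layer), primitivity of $G^{\gamma}$ on that layer supplies a common length $\ell$ (a multiple of $\gamma$) and length-$\ell$ walks from $a$ and from $b$ to a single node $w$; routing these two tokens along their walks over the next $\ell$ rounds merges them, and because every $\phi_r$ is a function it can only merge tokens, never split them, so no previously achieved merge is lost. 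Iterating over all layers produces a single $\Phi$ collapsing every layer, that is, a positive-probability path into $\mathcal{L}$. The technical care here lies in realizing all the per-layer merges as one globally consistent pick sequence while preserving earlier merges; primitivity of $G^{\gamma}$ and the non-splitting behaviour of function composition are the tools that make this work.

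Finally I would assemble the pieces. Since $\mathcal{L}$ is closed, any all-coloured configuration outside $\mathcal{L}$ that can reach it lies in a strongly connected component with an outgoing edge and is therefore transient; hence every all-coloured leaf is contained in $\mathcal{L}$, and by the second paragraph it is a rotation orbit of size dividing $\gamma$. Together with the all-uncoloured fixed point of size $1$, this shows that every leaf has size dividing $\gamma$, so the convergence period divides $\gamma$. $\hfill\square$
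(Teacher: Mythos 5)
Your proposal is correct, and it follows the paper's high-level skeleton --- partition $V$ into the $\gamma$ period classes (your layers), force each class to become monochromatic, and observe that the residual dynamics on class-constant colourings is a deterministic cyclic shift whose orbits have size dividing $\gamma$ --- but the mechanism you use for the central step is genuinely different. The paper first proves a consensus lemma for the aperiodic case (Lemma~\ref{strcon}) by a forward colour-spreading argument built on cycles whose lengths have HCF $1$ and the extended Euclidean algorithm; it then constructs, for each period class $\Gamma$, a derived weighted graph $G_\Gamma$ (Lemmas~\ref{eqrel}, \ref{unicycle}, \ref{strap}) in which one round simulates $\gamma$ rounds of the original process, and applies the consensus lemma to $G_\Gamma$. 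You instead first reduce to all-coloured configurations (coloured nodes never revert, so the all-uncoloured fixed point and the all-coloured configurations are the only candidates for leaves), and then invoke the voter-model duality: the time-$K$ colouring is the pullback of the initial colouring under a composition $\Phi$ of pick maps, and primitivity of $G^{\gamma}$ on each layer lets you build a positive-probability pick sequence whose $\Phi$ collapses every layer to a single node. Your route needs no derived weighted graph and no separate aperiodic lemma (aperiodicity is just the one-layer case), and the upfront reduction makes the duality exact --- it cleanly sidesteps a point the paper treats loosely, namely that one round of $G_\Gamma$ coincides with $\gamma$ rounds of $G$ only when retention at uncoloured intermediate nodes never triggers. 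The paper's extra machinery is not wasted, however: $G_\Gamma$ is reused later to obtain the polynomial convergence-time bound for strongly connected graphs, which your construction does not directly provide. Two caveats for a full write-up: the layer structure and the primitivity of $G^{\gamma}$ on each layer, which you cite as classical, are precisely the content of the paper's Lemmas~\ref{eqrel}, \ref{unicycle} and~\ref{strap}, so they would need to be proved or explicitly cited; and the token-merging step should state that if the two prescribed walks collide early, the merged token simply follows one of the walks to the common endpoint, so the merge is still realised by a single-valued pick function and earlier merges are preserved.
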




\begin{proof}
First, we cover the case of $\gamma=1$ in Lemma~\ref{strcon}. (the proofs of lemmas in this proof are given in Appendix \ref{appendix:convlemma}). To prove this lemma, we show that there is a path from any non-monochromatic colouring to a monochromatic one. Then, any strongly connected component that does not contain a monochromatic colouring can't be absorbing. However, the monochromatic colourings have out degree 0, thus are singleton strongly connected components that are also absorbing, thus, these are the only absorbing strongly connected components of $C_G$ (please see Section~\ref{preliminaries:sec} for the definition of $C_G$). Thus, the period is 1. The proof uses a combinatorial argument, building on the Extended Euclidean algorithm.

\begin{lemma}
\label{strcon}
A system with a strongly connected aperiodic graph ($\gamma=1$) will reach a consensus.
\end{lemma}

Let us now consider graphs of period $\gamma \ge 2$. The period classes of a graph are the equivalence classes of $V$ given by the relation $u \sim v \Leftrightarrow$ distance from $u$ to $v$ is a multiple of $\gamma$. Recall that $\mu(u,v)$ denotes the distance from $u$ to $v$

\begin{lemma}
\label{eqrel}
The relation $u \sim v \Leftrightarrow \gamma | \mu(u,v)$ is an equivalence relation in a strongly connected graph with period $\gamma$.
\end{lemma}

To prove the above lemma, we show that all three properties reflexivity, symmetry, and transitivity hold following some standard techniques. Building on this lemma and establishing a connection between the above relation $\sim$ and a newly defined relation $\sim_v$ (with respect to an arbitrary node $v$), we provide the below lemma. 

\begin{lemma}
\label{unicycle}
The graph obtained by contracting the period classes to single nodes is a single directed cycle of length $\gamma$.
\end{lemma}

For a periodic graph with period $\gamma$, consider a period class $\Gamma \subset V$. Define the graph $G_\Gamma$ with nodes $\Gamma$ and edges $(u,v)$ for $u,v \in \Gamma$ if and only if there is a $u$-$v$ path of length $\gamma$ in $G$, with weight $$\omega_\Gamma((u,v)) = \sum_{P \in P^\gamma_{(u,v)}}\prod_{e \in P}\omega(e)$$
where $P^\gamma_{(u,v)}$ is the set of all $u$-$v$ paths of length $\gamma$. The weight of an edge thus defined to be the probability of $u$ adopting $v$'s colour after $\gamma$ rounds. Then, one round in $G_\Gamma$ exactly depicts the set $\Gamma$ after $\gamma$ rounds in the process on $G$.
\begin{lemma}
\label{strap}
The graph $G_\Gamma$ defined above is strongly connected and aperiodic.
\end{lemma}

Then, by Lemma~\ref{strcon}, $\Gamma$ reaches a consensus. So each period class will be monochromatic. Then, $\gamma$ rounds later, they will have the same colour again, adopting the consensus of the previous period class at each round. So the period of convergence divides $\gamma$.
\end{proof}



\subsubsection{General Graphs}

For a strongly connected graph $G$, we provided the upper bound of $\gamma\le n$, but there is no such bound in the general case. In particular, there is a family of graphs for which the period of convergence is as big as $2^{n-2}$, which is exponential.

Consider nodes $1$ to $n$ with edges $(i,1)$ and $(i,2)$ of weight $0.5$ each for $3 \leq i \leq n$. Initially colour node 1 blue, node 2 red, and the rest uncoloured. Please see Figure~\ref{exp-conv} (left).
Nodes 1 and 2 will keep their colour unchanged forever, since they have out degree 0 (i.e., are stubborn).
On the other hand, all nodes from $3$ to $n$ will switch between blue and red independently in each round. In the Markov chain, there is an edge from every state $S$ with $S(1) = b, S(2)=r$ to every other such state and no edges to any other state. Thus, this is an absorbing strongly connected component of size $2^{n-2}$. This implies that the period of convergence in this setup is at least $2^{n-2}$

\begin{figure}[t]
\centering
\includegraphics[width=0.7\columnwidth]{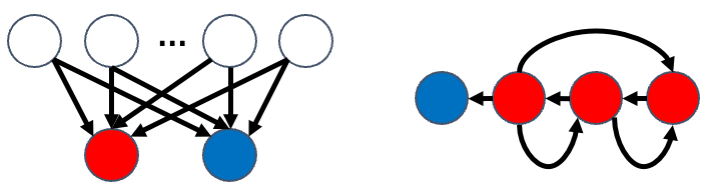}
\caption{The system constructed with (\textbf{left}) exponential period of convergence, (\textbf{right}) exponential convergence time for $n=4$.}
\label{exp-conv}
\end{figure}


\subsection{Convergence Time}
If $G$ is strongly connected, we provide a polynomial upper bound on the convergence time, but in the general case the convergence time can be exponentially large.


\subsubsection{Strongly Connected Graphs} For strongly connected graphs of period $\gamma$ (the HCF of the length of all cycles), the number of rounds required for all uncoloured nodes to be coloured is in $O(n^2 \log(n))$, following a proof from~\cite{viral}. How long does it take for the process to converge after all nodes are coloured? Similar to our proof of Theorem~\ref{strper}, we consider individual period classes $\Gamma_i$, $|\Gamma_i| = n_i$, $\sum_i n_i = n$ and their derived graphs $G_{\Gamma_i}$. The derived graph is strongly connected and aperiodic, which means that it converges in $O(n_i^3 \log(n_i))$, following the proof from~\cite{Has99}. By our construction, each round in this graph is $\gamma$ rounds in the original graph. The period class $\Gamma_i$ is expected to converge in $O(\gamma n_i^3 \log(n_i))$. There are $\gamma$ such classes, which converge independently. The system is said to converge when all classes have converged. So we need the maximum of all their convergence times. Since the convergence time is a positive random variable, we can upper bound the maximum convergence time by the sum of all convergence times. Then, by linearity of expectation, the expected convergence time is $O(\sum_i \gamma n_i^3 \log(n_i))$. Since $f(x) = x^3 \log(x)$ is convex, $\sum_if(n_i) \leq f(\sum_i n_i)$ for positive $n_i$. Also, remember $\sum_i n_i = n$. So the expected stopping time is $O(\gamma n^3 \log(n))$. 


\subsubsection{General Graphs}

Consider the graph $G$ on $n$ nodes $v_1 \cdots v_n$ with edges $(v_{i+1}, v_i)$ for every $1 \leq i \leq n-1$ and $(v_i, v_j)$ for $2 \leq i < j \leq n$ where the first node is stubborn and blue, and all the other nodes are initially red. All edges are of weight 1 before normalisation. Please see Figure~\ref{exp-conv} (right) for an example.

It is straightforward to observe that this process eventually converges to a fully blue colouring, where all nodes are blue. Thus, the convergence time is the number of rounds the process needs to reach such a colouring. We establish a connection between our process and so-called Gambler's ruin with a soft hearted adversary process~\cite{lehrer}, which then allows us to bound the convergence time.

We observe that the first time a node turns blue has to be by picking the previous node when it was blue, since no node after it can be blue without it having been blue. Furthermore, if every node after some $v$ turns red at any point, then $v$ can again only become blue by picking the previous node after that node turns blue. Let us consider the sequence $X(t) = \max_{S_t(v_i) = b} i$. We know that for the whole graph to turn blue at round $t$, we must have $X(t) = n$. And for the $n$-th node to turn blue, we need the $\frac{n}{2}$-th node to turn blue. Thus, a lower bound on $t$ for which $X(t)=n/2$ gives us a lower bound on the convergence time.

By monotonicity, the expected time from any colouring $S_t$ is at least the time from the colouring where the first $X(t)$ nodes are blue. We use this to design a simpler process. Consider a Gambler's ruin set-up with $n$ nodes where the state at time $t$ is $X(t)$. Thus, from a state $i$, it moves ahead when $v_{i+1}$ picks $v_i$. That is, with probability $\frac{1}{n-i}$. For $i \leq \frac{n}{2}$, this is $< \frac{2}{n}$. It stays in the same state when $v_{i+1}$ doesn't pick $v_i$, but $v_i$ picks $v_{i-1}$. Since the events are independent, this happens with probability $\frac{n-i-1}{n-i \cdot \frac{1}{n-i+1}}$. Again, for $i \leq \frac{n}{2}$, this is less than $\frac{4(n-2)}{n^2}$. Otherwise, state decreases. Again, we bound it by saying the state only decreases by 1, while the model can actually have all nodes pick the last node and go back to the initial state. In fact, let us be generous and say that if we didn't decrease, we will increase. Then, for the first $\frac{n}{2}$ rounds, we have that the probability of moving ahead, $p \leq \frac{2}{n} + \frac{4(n-2)}{n^2} = \frac{6n-8}{n^2} <\frac{6}{n}$. Further, let us consider $p = \frac{6}{n}$ and the probability of moving backwards, $q = \frac{n-6}{n}$. 

Using the fact that we are interested in a lower bound, we could introduce the above, much simpler process, which is at least as fast as the original process. Thus, any lower bound on this process applies to our original convergence time. This simplified process corresponds to the Gambler's ruin studied in~\cite{lehrer} whose analysis gives us the bound of $\Omega(n^n+n^2)$, which is growing exponentially.

\section{Conclusion}

We studied a generalisation of the popular Voter Model, by extension to directed weighted graphs and the introduction of stubborn agents and agents with no initial opinion. We proved the Adoption Maximisation problem is computationally hard. However, we provided a polynomial time algorithm which not only has the best possible theoretical approximation guarantee but also outperforms other algorithms on real-world data. Furthermore, we gave bounds on the period of the process in terms of graph parameters such as number of nodes, and the highest common divisor of cycle lengths.

While our proposed algorithm is polynomial, it can't handle massive graphs appearing in the real-world. Can we design faster algorithms without significant sacrifice on accuracy? Furthermore, we proved that both convergence period and convergence time can be exponential in general graphs (with tighter bounds for special graphs). What about graphs appearing in the real world? We have done some foundational studies, reported in Appendix \ref{appendix:convergence-exp}, but a deeper study of this question is left for the future work.
Finally, we have checked that our hardness and algorithmic findings, very easily extend to more setups such as (1) where the users in the selected seed set are loyal, that is, keep their colour unchanged, (2) when the goal is to optimise the average number of adoptions over the whole process rather than a fixed time $\tau$. However, the convergence properties fall short in covering a wider collection of setups. Thus, it would be interesting to study convergence properties under various scenarios, especially for different agent types.



\bibliographystyle{named}
\bibliography{ijcai25}

\newpage~\newpage
\appendix

\section{Proof of Lemma~\ref{ineq}}
\label{ineq-appendix}

\begin{itemize}
\item \textit{Proof of inequality A}:

We have that
\begin{equation*}
\begin{split}
&\ln \left(1 - \frac{1}{l} + d\right) < \ln \left(1+d\right) < d\\ &
\Rightarrow l \ln \left(1 - \frac{1}{l} + d\right) +1 < ld+1 = \tau \\ &
\Rightarrow e^{-\frac{\tau-1}{l}} < \frac{1}{1 - \frac{1}{l} + d} \\ &
\Rightarrow \left(1 - \frac{1}{l} + d\right) e^{-\frac{\tau-1}{l}} < 1
\end{split}
\end{equation*}

Furthermore, using $(1-x)\le e^{-x}$ and then the above inequality, we get

\begin{equation*}
\begin{split}
&\left( 1- \frac{1}{l} \right)^l < e^{-1} \\ &  \Rightarrow \left( 1- \frac{1}{l} \right)^{\tau-1} < e^{-\frac{\tau-1}{l}} \\ &
\Rightarrow \left(1 - \frac{1}{l} + d\right) \left( 1- \frac{1}{l} \right)^{\tau-1} < 1.
\end{split}
\end{equation*}

Rearranging the terms, we get the desired inequality.

\item \textit{Proof of inequality B}:

We have that 
\begin{equation*}
\label{eq-b}
\begin{split}
&\log \left(2 \cdot \frac{1}{\epsilon}\right) < \log \left(e \cdot \frac{1}{\epsilon}\right) = 1 + \log \left(\frac{1}{\epsilon}\right) < \frac{1}{\epsilon} \\ & \Rightarrow \log \left(\frac{\epsilon}{2}\right) > - \frac{1}{\epsilon}
\end{split}
\end{equation*}

This implies
\begin{equation}
\label{eq-b}
\begin{split}
\frac{\epsilon}{2} > e^{-\frac{1}{\epsilon}}
\end{split}
\end{equation}

Furthermore, we calculate
\begin{equation*}
\begin{split}
&\left( 1- \frac{1}{l} \right)^l < e^{-1} \Rightarrow \left( 1- \frac{1}{l} \right)^{\tau-1} < e^{-\frac{\tau-1}{l}} \\ &= e^{-\frac{l d+1-1}{l}} = e^{-d} < e^{-\eps} < e^{-\frac{1}{\epsilon}} < \frac{\epsilon}{2}.
\end{split}
\end{equation*}
For the last step, we used Equation~\eqref{eq-b}.

Now, consider
\[
 \left(1-\frac{1}{e}+\epsilon\right) \leq 1.
 \]
Multiplying this with the previous inequality, we get 
\[
\left(1-\frac{1}{l}\right)^{\tau-1}\left(1-\frac{1}{e}+\epsilon\right) < \frac{\epsilon}{2} \cdot 1.
\]

\item \textit{Proof of inequality C}: By the previous inequality, the denominator of RHS is greater than $\frac{\epsilon}{2}$. Also $$\frac{1}{e} - \left(1-\frac{1}{l}\right)^{\tau-1} < \frac{1}{e}.$$ Thus, the RHS is less than $$\frac{2}{e\epsilon} < \frac{1}{\epsilon} \leq \eps = d < d+1.$$

\end{itemize}

\section{Proof of Lemma~\ref{a-vs-lemma}}
Since $A'\not\subset V_{\mathcal{S}}$, there exists $v \in A'$ such that $ v \notin V_{\mathcal{S}}$. Then, it suffices to show that there is a $w \in V_{\mathcal{S}}$ such that $(A' \setminus \{v\}) \cup \{w\}$ gives a higher value than the one by $A'$.

The leaves of the graph, i.e., $V_\mathcal{S}$, are all stubborn nodes, some coloured blue and some uncoloured. At any round, a node $o_j$ will turn blue if it picks a node $s_k$ that is blue. Also, since there are no red nodes, once a node turns blue it will not change colour. Thus, a node $o_j$ is blue at time $\tau$ if and only if either $o_j \in A'$ or $o_j$ picks some $s_k \in A'$ in one of the rounds. If $o_j$ turns blue at round $t$, all of nodes $o_j^1, \cdots o_j^{\eps}$ will turn blue in round $t+1$ since they only have one out-neighbour: $o_j$. Thus, a node $o_j^j$ is blue at time $\tau$ if and only if $o_j$ is blue at time $\tau -1$. A node $s_k$ is stubborn and hence blue at time $\tau$ if and only if $s_k \in A'$. Probability of $o_j$ turning blue at any round is $\frac{|N_+(o_j) \cap A'|}{|N_+(o_j)|}$. Thus, if $|N_+(o_j) \cap A'|=0$ and $o_j \notin A'$ we can see that the probability of $o_j$ being blue at time $\tau$ is 0 since it wasn't initially blue, and doesn't turn blue in any round. Then all $o_j^j$ also have probability 0 of being blue. Conversely, if $o_j \in A'$, then it is blue with probability 1. Furthermore, since $l\geq 2$ and $\eps \geq 2$, we have $\tau \geq 4$. So there is at least one round. So all $o_j^j$ are blue with probability 1. This leaves the case where $|N_+(o_j) \cap A'| \neq 0$ and $o_j \notin A'$. $o_j$ is not blue at a time $t$ if it didn't pick a blue node in the first $t$ rounds. We have $|N_+(o_j) \cap A'| \geq 1$ and $|N_+(o_j)| \leq l$. The second inequality comes because $N_+(o_j) \subset V_{\mathcal{S}}$ and $|V_\mathcal{S}| = l$. Thus, $o_j$ picks a blue node at each round with a probability $P$, where $\frac{1}{l} \leq P \leq 1$. Then, $1-\left(1-\frac{1}{l}\right)^\tau \leq S_\tau(o_j,b) \leq 1$ and $1-\left(1-\frac{1}{l}\right)^{\tau-1} \leq S_\tau(o_j^i,b) \leq 1$.

Back to $v$, there are 5 kinds of nodes $v$ can be. We will handle them as 5 cases.

\begin{itemize}
    \item \textbf{Case 1:} $v = o_j^i$ for some $i, j$ and $o_j \in A'$. Since $o_j \in A'$, $o_j^i$ is going to turn blue even if it isn't in $A'$, then $F_\tau(A') = F_\tau(A' \setminus \{v\})$. But since in $V_\mathcal{S}$ are blue with probability 0 if they are not in $A'$ and 1 if they are. Thus, any $w \in V_\mathcal{S} \setminus A'$ satisfies $F_\tau(A') + 1 \leq F_\tau(A' \cup \{w\})$ and we are done.

\item \textbf{Case 2:} $v = o_j^i$ for some $i, j$ and $|N_+(o_j) \cap A'| \neq 0$ and $o_j \notin A'$. If $v$ were to be removed from $A'$, it would still be blue with probability $S_\tau(o_j^i,b) \geq 1-\left(1-\frac{1}{l}\right)^{\tau-1}$. Thus, $F_\tau(A') - F_\tau(A' \setminus \{v\}) \leq \left(1-\frac{1}{l}\right)^{\tau-1} < 1$. So as in the previous case, any $w \in V_\mathcal{S} \setminus A'$ satisfies.

\item \textbf{Case 3:} $v = o_j^i$ for some $i, j$ and $|N_+(o_j) \cap A'| = 0$ and $o_j \notin A'$.
Here, $F_\tau(A') - F_\tau(A' \setminus \{v\}) = 1$, but since $O_j \in \bigcup_{S \in \mathcal{S}} S, |N_+(o_j)| \neq 0$. And any $w \in N_+(o_j)$, if blue, will also give a probability $\left(1-\left(1-\frac{1}{l}\right)^\tau\right)$ to $o_j$ and $\left(1-\left(1-\frac{1}{l}\right)^{\tau-1}\right)$ to all $o_j^i$. So $F_\tau((A' \setminus \{v\}) \cup \{w\}) - F_\tau(A' \setminus \{v\}) \geq 1 + 1-\left(1-\frac{1}{l}\right)^\tau+d\left(1-\left(1-\frac{1}{l}\right)^{\tau-1}\right) > 1$. This satisfies our $w$ and we are done.

\item \textbf{Case 4:} $v = o_j$ for some $j$ and $|N_+(o_j) \cap A'| \neq 0$. If $o_j$ weren't initially blue, it would still turn blue with probability at least $1-\left(1-\frac{1}{l}\right)^\tau$. So the increment is at most$\left(1-\frac{1}{l}\right)^\tau$. Also, all $o_j^i$ are now turning blue with probability 1 which will be reduced to at least $\left(1-\frac{1}{l}\right)^{\tau-1}$. So
\begin{equation*}
\begin{split}
&F_\tau(A') - F_\tau(A' \setminus \{v\}) \\ & \leq \left(1-\frac{1}{l}\right)^\tau+d\left(1-\left(1-\frac{1}{l}\right)^{\tau-1}\right)  
\end{split}
\end{equation*}

which by Lemma~\ref{ineq} (A) is less than $1$. Thus, our $w$ from cases 1 and 2 satisfies again.

\item \textbf{Case 5:} $v = o_j$ for some $j$ and $|N_+(o_j) \cap A'| = 0$. Similar to Case 3, 

\begin{equation*}
\begin{split}
F_\tau(A') - F_\tau(A' \setminus \{v\}) = 1+d
\end{split}
\end{equation*}
and 

\begin{equation*}
\begin{split}
&F_\tau((A' \setminus \{v\}) \cup \{w\}) - F_\tau(A' \setminus \{v\}) \\ & \geq 1 + 1-\left(1-\frac{1}{l}\right)^\tau+d\left(1-\left(1-\frac{1}{l}\right)^{\tau-1}\right).
\end{split}
\end{equation*}

Then 
\begin{equation*}
\begin{split}
&F_\tau((A' \setminus \{v\}) \cup \{w\}) - F_\tau(A') \\ & \geq 1 -\left(1-\frac{1}{l}\right)^\tau - d\left(\left(1-\frac{1}{l}\right)^{\tau-1}\right)
\end{split}
\end{equation*}
and again by Lemma~\ref{ineq} (A), we are done.
\end{itemize}

\section{Proof of Lemma~\ref{lemma4}}
\label{lemma4-appendix}
We start by the left inequality. As we explained before, we have $$am \leq k + mc + d \ mc.$$ Rearranging gives us $$\frac{am - k}{1+d} \leq mc$$ which is equivalent to
\[
\frac{(am-k)\left(1-\left(1-\frac{1}{l}\right)^{\tau-1}\right)}{(1+d)\left(1-\left(1-\frac{1}{l}\right)^{\tau-1}\right)} \leq mc
\]
Furthermore, we have

\[
\left(1-\frac{1}{l}\right)^{\tau} < \left(1-\frac{1}{l}\right)^{\tau-1}.
\]

Combining the above two inequalities, we get
\begin{equation*}
\begin{split}
&\frac{(am-k)\left(1-\left(1-\frac{1}{l}\right)^{\tau-1}\right)}{1-\left(1-\frac{1}{l}\right)^{\tau}+d\left(1-\left(1-\frac{1}{l}\right)^{\tau-1}\right)} \\ &< \frac{(am-k)\left(1-\left(1-\frac{1}{l}\right)^{\tau-1}\right)}{(1+d)\left(1-\left(1-\frac{1}{l}\right)^{\tau-1}\right)}
\\ & \Rightarrow \frac{(am-k)\left(1-\left(1-\frac{1}{l}\right)^{\tau-1}\right)}{1-\left(1-\frac{1}{l}\right)^{\tau}+d\left(1-\left(1-\frac{1}{l}\right)^{\tau-1}\right)} < mc
\end{split}
\end{equation*}
Now for the right inequality, as we discussed before we have

\begin{equation*}
    \begin{split}
        &k + mc \left(1-\left(1-\frac{1}{l}\right)^\tau \right) \\ &+ d \ mc \left(1-\left(1-\frac{1}{l}\right)^{\tau-1}\right) \leq am
    \end{split}
\end{equation*}

and rearranging directly gives us the required result.

\section{Proof of Lemma~\ref{noopt}}
\label{noopt-appendix}
Let $MC$ be solution obtained from $\{S_j:s_j\in A\}$. For the sake of contradiction, suppose that there is a solution $MC' > MC$. Let $AM'$ be the corresponding solution to the Adoption Maximisation problem achieved from the solution $MC'$. We show that $AM' > AM$ contradicting optimality of $AM$.

We know that $MC\leq m$. Thus, we get

\[
1 + MC < e^{MC} \leq e^m.
\]

and 
\[
\left(1 - \frac{1}{l}\right)^{\tau - 1} < e^{-\frac{\tau-1}{l}} = e^{-d} < e^{-m}.
\]

So the product of the above inequalities yields 

\[
(MC+1)\left(1 - \frac{1}{l}\right)^{\tau - 1} < 1
\]

Also, 
\[
\left(1-\frac{1}{l}\right)^{\tau} < \left(1-\frac{1}{l}\right)^{\tau-1}<1.
\]

As we explained before, we know that
\begin{equation*}
\begin{split}
AM' &\geq k + MC' \left(1-\left(1-\frac{1}{l}\right)^\tau \right) \\ &+ d \ MC' \left(1-\left(1-\frac{1}{l}\right)^{\tau-1}\right)
\end{split}
\end{equation*}

Furthermore, since the solutions are integer values $MC' > MC$ implies $$MC' \geq MC + 1.$$

Now, combining the above inequalities we can conclude
\begin{equation*}
\begin{split}
AM' &\geq k + (MC+1) \left(1-\left(1-\frac{1}{l}\right)^\tau \right) \\ &+ d \ (MC+1) \left(1-\left(1-\frac{1}{l}\right)^{\tau-1}\right) \\ &
=k + MC + d \ MC + 1 - (MC+1)\left(1-\frac{1}{l}\right)^{\tau} \\ & + d \left( 1 - (MC+1) \left(1-\frac{1}{l}\right)^{\tau-1}\right) \\ & >k + MC + d \ MC \\ & \geq AM
\end{split}
\end{equation*}
Thus, $AM' > AM$ and we are done.

\section{Calculations from Theorem~\ref{hardness}}
\label{cal-hardness-appendix}

Calculations for Equation~\eqref{eq-star} are as follows. We first use the left inequality from Lemma~\ref{lemma4} on the solution given by our algorithm, which gives us
\begin{equation*}
\overline{MC} > \frac{(\overline{AM}-k)\left(1-\left(1-\frac{1}{l}\right)^{\tau-1}\right)}{D}.
\end{equation*}

Now, applying that $\overline{AM}$ is at least an $(1-\frac{1}{e}+\epsilon)$ approximation of $AM$, we get
\begin{equation*}
\begin{split}
\overline{MC} & > \frac{\left(\left(1 - \frac{1}{e}+\epsilon\right)AM-k\right)\left(1-\left(1-\frac{1}{l}\right)^{\tau-1}\right)}{D} \\ & = \left(\frac{\left(1 - \frac{1}{e}+\epsilon\right)(AM-k)}{D} + \frac{\left(-\frac{1}{e}+\epsilon\right)k}{D}\right)\\ &\times \left(1-\left(1-\frac{1}{l}\right)^{\tau-1}\right).
\end{split}
\end{equation*}

By Lemma~\ref{noopt}, the solution to Maximum Coverage corresponding to the optimal solution AM is the optimal solution MC. Now, applying the right inequality of Lemma~\ref{lemma4} and opening the brackets:

\begin{equation*}
\begin{split}
\overline{MC} & >\left(1-\frac{1}{e}\right)MC + \epsilon MC \\ &- \left(1 - \frac{1}{e} + \epsilon \right)MC\left(1-\frac{1}{l}\right)^{\tau-1} + \\ & \frac{\left(\epsilon-\frac{1}{e}\right)k}{D}\left(1-\left(1-\frac{1}{l}\right)^{\tau-1}\right) \hfill
\end{split}
\end{equation*}

Calculations from Equation~\eqref{eq-cal-2} using Lemma~\ref{ineq} are as follows:
Start by adding inequality $A$ and $C$
\begin{equation*}
    \begin{split}
        & d + 1 + 1-\left(1-\frac{1}{l}\right)^\tau - d\left(1-\frac{1}{l}\right)^{\tau-1} \\ 
        & > \frac{\frac{1}{e} - \left(1-\frac{1}{l}\right)^{\tau-1}}{\epsilon - \left(1-\frac{1}{l}\right)^{\tau-1}\left(1-\frac{1}{e}+\epsilon\right)} \\
        & \Rightarrow D+1 > \frac{\frac{1}{e} - \left(1-\frac{1}{l}\right)^{\tau-1}}{\epsilon - \left(1-\frac{1}{l}\right)^{\tau-1}\left(1-\frac{1}{e}+\epsilon\right)} \\
        & \Rightarrow D > \frac{\frac{1}{e} - \left(1-\frac{1}{l}\right)^{\tau-1}}{\epsilon - \left(1-\frac{1}{l}\right)^{\tau-1}\left(1-\frac{1}{e}+\epsilon\right)} -1 \\
        & \Rightarrow D > \frac{\frac{1}{e} - \left(1-\frac{1}{l}\right)^{\tau-1} - \epsilon + \left(1-\frac{1}{l}\right)^{\tau-1}\left(1-\frac{1}{e}+\epsilon\right)}{\epsilon - \left(1-\frac{1}{l}\right)^{\tau-1}\left(1-\frac{1}{e}+\epsilon\right)} \\ 
& \Rightarrow \epsilon - \left(1-\frac{1}{l}\right)^{\tau-1}\left(1-\frac{1}{e}+\epsilon\right) \\ &> \frac{\frac{1}{e} - \left(1-\frac{1}{l}\right)^{\tau-1} - \epsilon + \left(1-\frac{1}{l}\right)^{\tau-1}\left(1-\frac{1}{e}+\epsilon\right)}{D} \\ & = \frac{\left(\frac{1}{e} - \epsilon\right)\left(1-\left(1-\frac{1}{l}\right)^{\tau-1}\right)}{D}   
\end{split}
\end{equation*}

Calculations to show the RHS of Equation~\eqref{eq-star} is larger than $(1-1/e)MC$ are given below. Since the LHS of Equation~\eqref{eq-cal-2} is positive, and $MC \geq k$, we have:
\begin{equation*}
    \begin{split}
        & MC \left( \epsilon - \left(1-\frac{1}{l}\right)^{\tau-1}\left(1-\frac{1}{e}+\epsilon\right) \right) \\
        & > \frac{\left(\frac{1}{e} - \epsilon\right)k\left(1-\left(1-\frac{1}{l}\right)^{\tau-1}\right)}{D} \\
        & \Rightarrow 0 > - MC \left( \epsilon - \left(1-\frac{1}{l}\right)^{\tau-1}\left(1-\frac{1}{e}+\epsilon\right) \right) \\ 
        &- \frac{\left(\epsilon - \frac{1}{e}\right)k\left(1-\left(1-\frac{1}{l}\right)^{\tau-1}\right)}{D}
    \end{split}
\end{equation*}
Adding this last inequality to Equation ~\eqref{eq-star} and opening brackets, we get
$$\overline{MC} > \left(1-\frac{1}{e}\right)MC.$$
\section{Proof of Lemma~\ref{nodseq}}
\label{nodseq-appendix}
We will use induction. For $t=0$, we see that this is true because the only node in the node sequence of $v$ is itself so the time 0 colour of $v$ is the time 0 colour of itself. Let the statement be true for $t=i$. Then, for $t=i+1$, a node is uncoloured if and only if it is uncoloured at time $t$ and the node it picks is also uncoloured at time $t$. By the induction hypothesis, this is true if and only if all the nodes in $\gamma_{t}(v)$ and $\gamma_{t}(PS_\tau(v,t))$ were initially uncoloured. Then, by concatenation, all the nodes in $\gamma_{t+1}(v)$ were also initially uncoloured. If $v$ was coloured at time $t$, and it picks an uncoloured node, it should retain its colour. All the nodes in $\gamma_{t}(PS_\tau(v,t))$ were initially uncoloured, and $\gamma_t(v)$ has an initially coloured node. Then, the first coloured node in the concatenation is the first coloured node in $\gamma_t(v)$, retaining its colour. If $v$ picks a coloured node, it adopts its colour. $\gamma_t(PS_\tau(v,t))$ has an initially coloured node. Then, the first coloured node in the concatenation is the first coloured node in $\gamma_t(PS_\tau(v,t))$, adopting its colour. Thus, the induction holds and the statement is true for all $0 \leq t \leq \tau$.

\section{Proof of Lemma~\ref{strcon}}
\label{appendix:convlemma}

We will prove this by showing that there is a path from any non-monochromatic colouring to a monochromatic one. Then, any strongly connected component that does not contain a monochromatic colouring can not be absorbing, but the monochromatic colourings have out degree 0, thus are singleton strongly connected components that are also absorbing. Thus, these are the only absorbing strongly connected components of $C_G$.

Consider an arbitrary non-monochromatic colouring $S$ of $G$. If there are no blue nodes in $S$, then there have to be only red and uncoloured nodes. Consider the update sequence, where every node which has a red out-neighbour picks it. Any node that doesn't have a red out-neighbour will pick an uncoloured neighbour and retain its colour. Then, since there is at least one red node, after $t$ rounds, every node at a distance at most $t$ from it will be red and in a number of rounds equal to the diameter of $G$, all nodes will be red. Therefore, in this case, there is a path from $S$ to a monochromatic colouring in the corresponding Markov chain.

If there are blue nodes in $S$, consider a blue node $v$ and an update sequence where every node that has a blue out-neighbour picks it. Since $G$ is aperiodic, there exist cycles $C_1$ to $C_l$ the HCF of whose lengths $n_1$ to $n_l$ is 1. There also exist paths $P_i$ of length $m_i$ from $C_{i+1}$ to $C_i$ for each $1 \leq i \leq l-1$ and a $P_0$ of length $m_0$ from $C_1$ to $v$. These paths can be trivial if the cycles share a point.
For each $C_i$, let $P_{i-1} \cap C_i = a_i$ and $P_{i} \cap C_i = b_i$. Then, in $m_0$ rounds, $a_1$ is blue, after which it is blue after every $n_1$ round, since consecutive nodes on the cycle $C_1$ will be blue after each round. Since $b_1$ is on the cycle, it will turn blue somewhere between rounds $m_1$ and $m_1+n_1$ after which it will turn blue after every $n_1$ rounds. Each time $b_1$ turns blue, $a_2$ will turn blue after $m_1$ rounds due to $P_1$. Thus, after the first time $a_2$ turns blue, it will turn blue after every $n_1$ rounds since $b_1$ turns blue after every $n_1$ rounds. Additionally, due to the cycle $C_2$, each time $a_2$ turns blue, it will turn blue again after $n_2$ rounds. Thus, it will turn blue again after $\lambda n_1 + \mu n_2$ for each $\lambda , \mu \in \mathbb{N}$. By the extended Euler algorithm, after some rounds, $a_2$ will turn blue after every $HCF(n_1,n_2)$ rounds. Through $C_2$, $b_2$ will also turn blue with a period of $HCF(n_1,n_2)$. Now for induction let $b_i$ turn blue with period $HCF(n_1, n_2, \cdots n_i)$. Then, after $m_i$ rounds, $a_{i+1}$ turns blue after every $HCF(n_1, n_2, \cdots n_i)$ rounds. Through $C_{i+}$, $a_{i+1}$ turns blue after $n_{i+1}$ rounds each time it turns blue. So by extended Euclid's algorithm $a_{i+1}$, and through $C_{i+1}$ $b_{i+1}$, turns blue after every $HCF(HCF(n_1, n_2, \cdots n_i), n_{i+1}) = HCF(n_1, n_2, \cdots n_i, n_{i+1})$ rounds. Through induction, we see that $b_l$ turns blue with a period of $HCF(n_1,n_2, \cdots n_l) = 1$. Then, in $diam(G)$ rounds, the whole graph turns blue.

\section{Proof of Lemma~\ref{eqrel}}
\label{eqrel-appendix}

\begin{itemize}
    \item \textbf{Reflexivity.} It is trivial that $u \sim u$ since $\mu(u,u) = 0$.
    \item \textbf{Symmetry.} We show that if $u \sim v$, then $\gamma | \mu(u,v)$. Consider the shortest path from $u$ to $v$ followed by the shortest path from $v$ to $u$. This is a walk starting and ending at the same point $u$ so its length must be a multiple of $\gamma$. Since both $\mu(u,v)+\mu(v,u)$ and $\mu(u,v)$ are multiples of $\gamma$, so is $\mu(v,u)$ and thus $v \sim u$.
    \item \textbf{Transitivity}. Let $u \sim v$ and $v \sim w$. Then, by appending the shortest paths from $u$ to $v$ and $v$ to $w$, we have a $u$-$w$ walk with length multiple of $\gamma$. Appending the shortest path from $w$ to $u$, we get a walk that starts and ends at $u$ and so must have length divisible by $\gamma$. Thus, the length of the shortest path from $w$ to $u$ must have length divisible by $\gamma$ which means $w \sim u$. (By symmetry, $u \sim w$.)
\end{itemize}

We see that the relation given above is an equivalence relation due to the fact that all cycles have length to be multiples of $\gamma$. Additionally, since $\gamma$ is not 1, there are no edges between nodes in the same class as that would be a path of length 1. So each period class is an independent set.

\section{Proof of Lemma~\ref{unicycle}}
\label{unicycle-appnedix}
Consider any $v \in V$ and define the relation $\sim_v$ as $x \sim_v y \Leftrightarrow \gamma | \mu(v,x) - \mu(v,y)$. This partitions $V$ into $\gamma$ classes. We know none of the $\gamma$ classes are empty because there is a cycle of length at least $\gamma$ so there are nodes at distance $1$ to $\gamma -1$ from $v$. Furthermore, for any $x$ with $\mu(v,x) \equiv i \ \mod \ \gamma$ and $(x,y) \in E$, appending the edge to any walk from $v$ to $x$ gives a walk from $v$ to $y$ of length $\equiv i+1 \ \mod \gamma$. So the equivalence classes satisfy the cyclic property that there are only edges from one class to the next, giving the required structure on the contracted graph. Next, we show that $\sim_v = \sim$. Consider any $x \sim y$. Let $P$ be the shortest path from $v$ to $x$ and $Q$ from $x$ to $y$. Let the length of $P \equiv i \ mod \ \gamma$. Then on joining $P$ and $Q$, we get a path of length $\equiv i \ mod \ \gamma$ from $v$ to $y$. Thus the length of shortest path from $v$ to $y$ is also $\equiv i \ mod \ \gamma \Rightarrow x \sim_v y$. Now consider any $x \sim_v y$. Let $P$ be the shortest path from $v$ to $x$, $Q$ from $v$ to $y$, and $R$ from $x$ to $y$. Let length of $P \equiv i \ mod \ \gamma$. Then $Q \equiv i \ mod \ \gamma$. For any path $S$ from $y$ to $v$, both $PRS$ and $QS$ are walks starting and ending at $v$, so have length divisible by $\gamma$. $|P| + |R| + |S| \equiv |Q| + |S| \ mod \ \gamma \Rightarrow |R| \equiv 0 \ mod \ \gamma \Rightarrow x \sim y$. Thus, $\sim_v$ and $\sim$ are equivalent. Since they are equal, and the equivalence classes of $\sim_v$ satisfy the required property, so do the classes of $\sim$.

\section{Proof of Lemma~\ref{strap}}
\label{strap-appendix}
For any $u,v \in \Gamma$, consider a $u$-$v$ path in $G$. The length of this path must be a multiple of $\gamma$, and every $\gamma$-th node in the path must be in $\Gamma$. Having a path of length exactly $\gamma$ between 2 consecutive such nodes implies there's an edge between them in $G_\Gamma$. Replacing each $\gamma$ edges of the $u$-$v$ path with the respective edge gives us a $u$-$v$ path in $G_\Gamma$. Since $u$ and $v$ were arbitrary, $G_\Gamma$ is strongly connected.

Now, suppose $G_\Gamma$ is periodic with period $\rho$. Then, every cycle has length multiple of $\rho$. Since $\gamma\rho$ is not the period of $G$, there is a cycle whose length is not a multiple of $\gamma\rho$. Starting from a node in $\Gamma$, if we list all edges in this cycle and then contract every $\gamma$ edges into the corresponding edge in $G_\Gamma$, we get a cycle whose length is not a multiple of $\rho$. A contradiction. Thus, $G_\Gamma$ is aperiodic.

\section{Further Experimental Results on Algorithms}
\label{appendix:exp}
Figure below includes the experimental findings for all networks, including the ones omitted in the main text.

\begin{figure}[h]
    \includegraphics[width=0.45\linewidth]{fb0.png}
    \includegraphics[width=0.45\linewidth]{fb414.png} \\
    \includegraphics[width=0.45\linewidth]{otc.png}
    \includegraphics[width=0.45\linewidth]{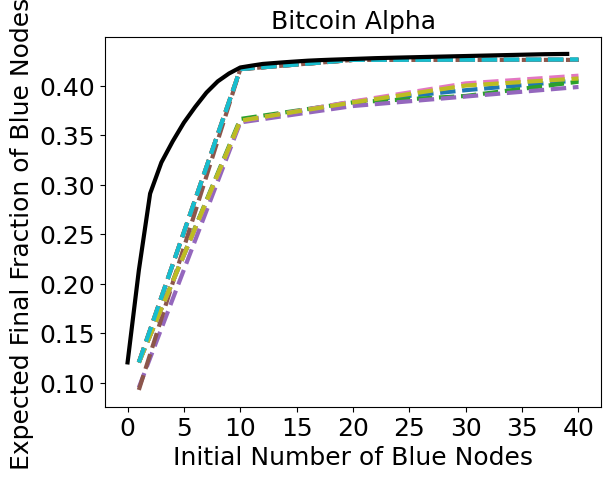} \\
    \includegraphics[width=0.45\linewidth]{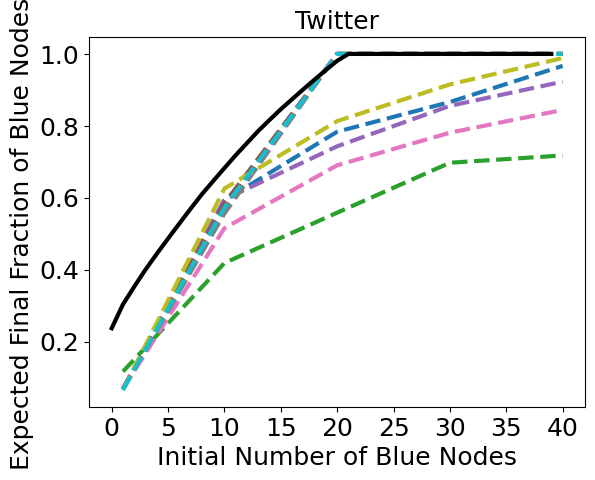}
    \includegraphics[width=0.45\linewidth]{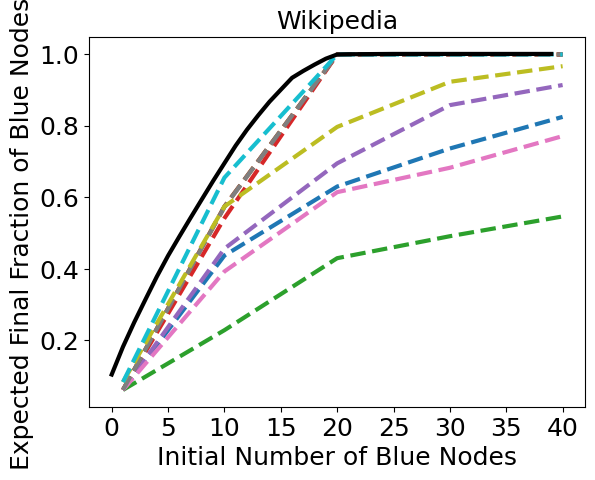} \\
    \includegraphics[width=0.45\linewidth]{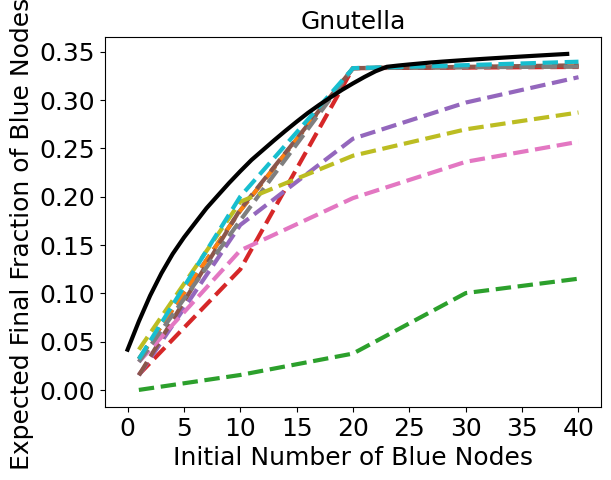} \hspace{0.15\linewidth}
    \includegraphics[width=0.3\linewidth]{legend.png} 
    
    \caption{Performance of Greedy algorithm against some well known centrality measures. In each graph, the final expected fraction of blue nodes is plotted against the budget. Each run has 20 red nodes and a budget varying from 1 to 40 for $\tau=20$ rounds.}
    \label{fig:graphs-appendix}
\end{figure}

\section{Experiments on Convergence Time}
\label{appendix:convergence-exp}

To complement our theoretical results on convergence time, we generate Hyperbolic Random Graph (HRG) of increasing size and simulate our model on them from different initial colourings. The graphs are generated using the algorithm provided in~\cite{hgg}. In the algorithm, the value of model parameter $R$ was taken to be 5 and $\alpha$ to be 1.0 since these values gave suitable edge to node ratio matching the real world networks. Additionally, since we are testing the bounds for strongly connected graphs, we check if the graph is strongly connected. If it is not, we discard the graph and generate a new one. To colour the graph, we adopt four different methods. 
\begin{enumerate}
\item We colour two diametrically opposite points with red and blue, and the rest uncoloured.
\item Each node is assigned a random colour, with probability of either red or blue being 1\% each and uncoloured being 98\%.
\item Each node is assigned a random colour, red or blue with 50\% probability. There are no uncoloured nodes.
\item All the blue nodes are chosen as the $\frac{n}{2}$ nodes closest to a fixed node $v$. The remaining nodes are red, there are no uncoloured nodes.
\end{enumerate}

The above methods aim to mimic scenarios where the process takes the longest. Thus, they would give an estimate of the maximum convergence time.
The idea for approach 1 and 4 is to minimise the initial interaction between red and blue nodes so that they take maximum time to overcome each other. Furthermore, the first two experiments have most of the nodes initially uncoloured and the last two experiments have all nodes initially coloured. The algorithms for finding diameter, shortest distances, and testing strongly connectedness come from the Networkx library~\cite{networkx}.

Since we do not have an algorithm to directly calculate the expected convergence time (this problem is suspected to be \#P-hard), we use a Monte Carlo approach, generating a HRG, running the simulation, and noting the convergence time. We then average the convergence time over 128 random graphs for each size.

The average number of rounds against graph size is plotted in Figure~\ref{fig:conv} for a number of initial colourings. We see that colouring all the nodes takes longer on average to converge than leaving them uncoloured. This might mean that the process converges through a path where all coloured nodes spread to uncoloured nodes in a way other than by random neighbours or nearest out-neighbours so initialising it this way takes longer to converge. Also, in both cases, the random colouring takes longer on average. So having random neighbours of a different colour can take longer to converge than having some structure to the initial state.

\begin{figure}[t]
\centering
\includegraphics[width=0.4\textwidth]{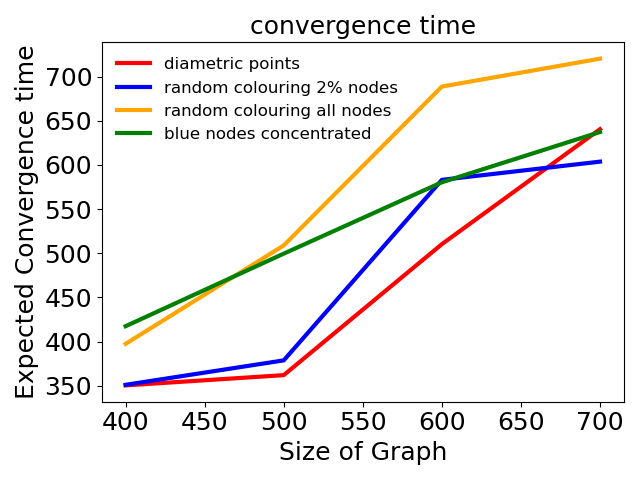}
\caption{Average convergence time for graphs of each size with four different strategies to initially colour the nodes.}
\label{fig:conv}
\end{figure}

We observe that the computed values are significantly smaller than our theoretical bound, which grows in a cubic rate in the size of the graph. Thus, it raises the question whether tighter bounds can be derived for special classes of graphs, such as HRG, which are more likely to be present in the real-world.

\end{document}